\documentclass[a4paper,numberwithinsect,USenglish]{lipics-v2021}

\title{On the Complexity of Fundamental Problems for DAG-Compressed Graphs}

\author{Florian Chudigiewitsch}{Institute for Theoretical Computer Science, Universität zu Lübeck, Germany}{fch@tcs.uni-luebeck.de}{https://orcid.org/0000-0003-3237-1650}{}
\author{Till Tantau}{Institute for Theoretical Computer Science, Universität zu Lübeck, Germany}{tantau@tcs.uni-luebeck.de}{}{}
\author{Felix Winkler}{Universität zu Lübeck, Germany}{}{}{}

\authorrunning{F. Chudigiewitsch and T. Tantau and F. Winkler}
\Copyright{F. Chudigiewitsch and T. Tantau and F. Winkler} 

\ccsdesc[500]{Theory of computation~Graph algorithms analysis}
\ccsdesc[500]{Theory of computation~Data structures design and analysis}
\keywords{graph compression, graph algorithms, complexity}

\category{} 

\relatedversion{} 

\nolinenumbers 

\hideLIPIcs  

\EventEditors{John Q. Open and Joan R. Access} \EventNoEds{2}
\EventLongTitle{42nd Conference on Very Important Topics (CVIT 2016)}
\EventShortTitle{CVIT 2016} \EventAcronym{CVIT} \EventYear{2016}
\EventDate{December 24--27, 2016} \EventLocation{Little Whinging, United
Kingdom} \EventLogo{} \SeriesVolume{42} \ArticleNo{23}

\theoremstyle{plain}
\newtheorem{problem}[theorem]{Problem}

\newtheorem{invariant}[theorem]{Invariant}

\usepackage{listings}
\lstdefinelanguage{pseudocode}{ morekeywords={ algorithm,method,new,and,not,
  if,then,else,while,do,repeat,until,seq, seqdo,return,call,
  for,pardo,foreach,print,output,input,exit,
  break,loop,end,begin,goto,par,global,local,
  read,write,stop,idle,procedure,function, throw,catch }, sensitive=true,
  morecomment=[l]{//}, morestring=[b]", morestring=[s]{``}{''}, }
  \lstdefinestyle{pseudocode}{ language=pseudocode, basicstyle=\small\rmfamily,
  commentstyle=\upshape\color{black!50}, keywordstyle=\bfseries\itshape,
  identifierstyle=\itshape, stringstyle=\rmfamily, columns=fullflexible,
  mathescape, literate={<-}{{$\gets$\ }}2, numbers=left,
  numberstyle=\scriptsize\sffamily, }

\usepackage{booktabs}

\usepackage{mathtools}

\newcommand\Class[1]{%
  \mathchoice%
  {\text{\normalfont\small$\mathrm{#1}$}}%
  {\text{\normalfont\small$\mathrm{#1}$}}%
  {\text{\normalfont$\mathrm{#1}$}}%
  {\text{\normalfont$\mathrm{#1}$}}%
}

{%
  \leavevmode\nobreak\par
  \begin{list}%
    {}%
    {%
      \def\labelstyle{\itshape} \setlength{\topsep}{0pt}%
      \settowidth{\labelwidth}{\labelstyle Parameter:}%
      \setlength{\leftmargin}{\labelwidth}%
      \addtolength{\leftmargin}{\labelsep}%
      \setlength{\itemsep}{0pt}%
      \setlength{\parsep}{0pt}%
    }%
    }%
    {%
  \end{list}%
}

\newcommand{\Lang}[1]{\text{\normalfont\textsc{#1}}}

\usepackage{bm}
\newcommand{\ori}[1]{\bm{#1}}
\newcommand{\abbrv}[1]{{\smaller{#1}}}

\makeatletter
\def\smaller{\fontsize{\dimexpr\f@size pt-1pt\end}{\f@size}\selectfont}
\makeatother

\newcommand\coloneq{\mathrel{\raise.4pt\hbox{:}{=}}}
\newcommand\eqcolon{\mathrel{{=}\raise.4pt\hbox{:}}}

\usepackage[move]{tcs-automove}

\usepackage{xfrac}

\makeatletter
\newtheorem*{rep@theorem}{\rep@title}
\newcommand{\newreptheorem}[2]{%
\newenvironment{rep#1}[1]{%
 \def\rep@title{#2 \ref{##1}}%
 \begin{rep@theorem}}%
 {\end{rep@theorem}}}
\makeatother
\newreptheorem{theorem}{Theorem}

\usepackage{tikz}
\usetikzlibrary{graphs, arrows.meta, quotes, positioning, backgrounds, shapes, decorations.pathmorphing,shapes.multipart,fadings}
\tikzset{ every picture/.style={semithick},%
  > = {Stealth[round,sep]}, grayed/.style = { draw=black!30 },
  node/.style={ draw, circle, minimum size=6mm, inner sep=0.5pt,
    font=\footnotesize, fill=white },%
  vertex/.style={ 
    node },
  small node/.style={ node,minimum size=4.5pt, inner sep=0pt, outer sep=0pt,
    font=\tiny },%
  on/.style={fill=white,inner sep=-.4pt,circle},
  mapsto/.style={|[sep]->,blue!50!black},
  arbitrary/.style={dashed,draw=black!50}, present/.style={black},
  missing/.style={black!10}, virtual node/.style={node,draw=none,fill=none},
  white node/.style={node}, black node/.style={node, fill=black, text=white},
  gray node/.style={node, arbitrary, fill=black!20}, red node/.style={node,
  red!60!black, text=white}, blue node/.style={node, blue!70!black, text=white},
  green node/.style={node, green!50!black, text=white},
  hilight/.style={red!80!black}, new/.style={ line width=.4pt, double
  distance=.8pt, draw=white, double=lipicsYellow, }, new edge/.style={
  draw=lipicsYellow, thick },
  sinkVertex/.style={node, font=\bfseries},
  cluster/.style={arc}}

\definecolor{darkgreen}{RGB}{0,150,0}
\definecolor{darkred}{RGB}{210,0,0}
\definecolor{darkblue}{RGB}{0,0,210}

  \newcommand{\doublecross}[1]{
  \draw[darkred, thick] ({#1}) -- ++(-0.36, 0.36) -- ++(0.72, -0.72);
  \draw[darkred,thick] ({#1}) -- ++(-0.36, -0.36) -- ++(0.72, 0.72);
  }

\tikzfading[name=fade left,
  left color=transparent!0, right color=transparent!100]
\tikzfading[name=fade right,
  left color=transparent!100, right color=transparent!0]
\tikzset{
  sepline/.style={thick,black!25},
  compressedUn/.style={double,double equal sign distance, thick},
  compressed/.style={double,double equal sign distance,->, thick},
  compressedSelfLoop/.style={double, double equal sign distance, thick, loop above },
  dedge/.style={->,thick},
  dedgeleft/.style={dedge, bend left= 10},
  arc/.style={->,black!50},
  clusterVertex/.style={node,draw=black!50},
  pickedOne/.style={draw=green!70!black},
}

\definecolor{darkred}{RGB}{139,0,0}
\definecolor{darkblue}{RGB}{59,86,255}
\definecolor{lightred}{RGB}{255,230,230}
\definecolor{lightblue}{RGB}{230,230,246}

\usepackage{mathtools}

\newcommand{\TSnice}{\widehat{T}'}

\newcommand{\Algo}[1]{{\text{\upshape\scshape#1}}}

\begin{document}

\maketitle

\begin{abstract}
  A \emph{\abbrv{DAG} compression} of a (typically dense) graph is a
  simple data structure that stores how vertex clusters are connected,
  where the clusters are described indirectly as sets of reachable
  sinks in a directed acyclic
  graph (\abbrv{DAG}). They generalize tree compressions, where the
  clusters form a tree-like hierarchy, and we give the first proof
  that \abbrv{DAG} compressions can achieve better compressions than
  tree compressions. Our interest in \abbrv{DAG} compression stems
  from the fact that several simple standard algorithms, like
  breadth-first search on graphs, can be implemented so that they 
  work directly on the compressed rather than on the original
  graph and so that, crucially, the runtime is relative to the
  (typically small)
  size of the compressed graph. We add another entry to the list
  of algorithms where this is possible, by showing that Kruskal's
  algorithm for computing minimum spanning trees can be adapted to work
  directly on \abbrv{DAG} compressions. On the negative side, we
  answer the central open problem from previous work, namely how hard
  it is to compute a minimum-size \abbrv{DAG} compression for a given
  graph: This is $\Class{NP}$-hard; and this is even the case
  for the dynamic setting, where we must update the \abbrv{DAG} compression optimally when
  a single edge is added or deleted in the input graph.
\end{abstract}

\newpage{}

\section{Introduction}

Data compression is an indispensable tool for
processing and storing huge amounts of data and has become a major research topic in
theoretical computer science~\cite{BlandfordBK03, BoldiV04, BouritsasLKB21,
ChierichettiKLMPR09, DhulipalaKKOPS16, ShunDB15, Versari21}.
We are interested in compressing \emph{dense graphs,} which in modern applications
can easily encompass billions of edges, in such a way that we can
run fundamental algorithms \emph{directly on the compressed graphs}
without needing to decompress them.
A particularly simple way of compressing directed graphs $\ori G =
(\ori V, \ori E)$ was recently introduced~\cite{BannachMT24} in the
form of \emph{\abbrv{DAG} compressions.} They are triples $D = (V,A,E)$
such that (we use boldface for the original graph and reserve the
standard font for the compression), firstly, $(V,A)$ is a 
directed acyclic graph (\abbrv{DAG}) whose sinks are exactly the
vertices in $\ori V \subseteq 
V$. Each vertex $v \in V$ represents a \emph{cluster $\ori C_D(v)
\subseteq \ori V$,} defined as the set of sinks reachable from~$v$
in~$(V,A)$ (we omit the subscript when $D$ is clear from
context). Secondly, $E \subseteq  V \times V$ is a \emph{set of 
compression edges} so that $\ori E = \bigcup_{(u, v)  \in E} \ori C(u)
\times \ori C(v)$, meaning that each compression edge  
$(u, v)$ encodes the presence  of all possible edges from vertices
in~$\ori C(u)$ to vertices in~$\ori C(v)$ in~$\ori G$. 
Figure~\ref{example:DAGCompression} depicts an example.

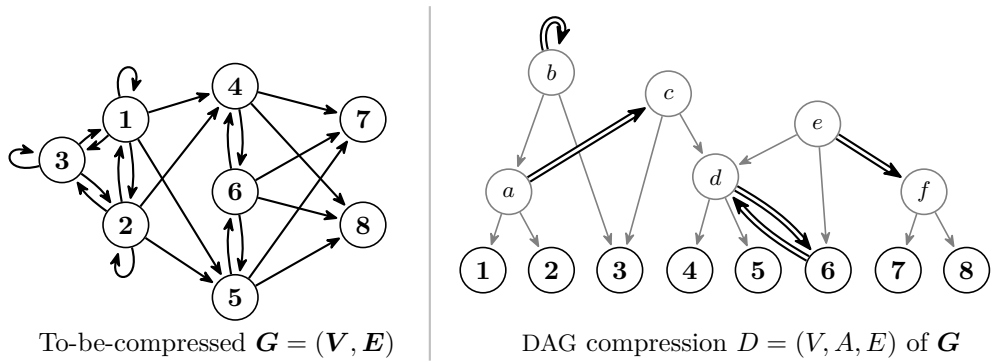
\begin{figure}[th]
  \centering

  \begin{tikzpicture}[node distance=1.4cm]
    \begin{scope}[xshift=4.2cm]
      \tikzset{node distance=0.9cm, fill=white}
      
      \node[sinkVertex] (1) {$\ori{1}$};
      \node[sinkVertex, right of=1] (2) {$\ori{2}$};
      \node[sinkVertex, right of=2] (3) {$\ori{3}$};
      \node[sinkVertex, right = 0.3cm of 3] (4) {$\ori{4}$};
      \node[sinkVertex, right of=4] (5) {$\ori{5}$};
      \node[sinkVertex, right= 0.3cm of 5] (6) {$\ori{6}$};
      \node[sinkVertex, right= 0.3cm of 6] (7) {$\ori{7}$};
      \node[sinkVertex, right of=7] (8) {$\ori{8}$};

      \node[clusterVertex, above right =0.6cm and -0.1cm of 1] (a) {$a$};
      \node[clusterVertex, above =2cm of 2] (b) {$b$};
      \node[clusterVertex, above right =1.9cm and 0.2cm of 3] (c) {$c$};
      \node[clusterVertex, above right =0.8cm and -0.1cm of 4] (d) {$d$};
      \node[clusterVertex, above left  = 1.5cm and -0.3cm of 6] (e) {$e$};
      \node[clusterVertex,  above right = 0.6cm and -0.1cm of 7] (f) {$f$};

      \draw (a) edge[arc] (1);
      \draw (a) edge[arc] (2);
      \draw (b) edge[arc] (a);
      \draw (b) edge[arc] (3);
      \draw (c) edge[arc] (3);
      \draw (c) edge[arc] (d);
      \draw (d) edge[arc] (4);
      \draw (d) edge[arc] (5);
      \draw (e) edge[arc] (d);
      \draw (e) edge[arc] (6);
      \draw (f) edge[arc] (7);
      \draw (f) edge[arc] (8);

      \draw[->] (b) edge [compressedSelfLoop] (b);
      \draw (a)  edge[compressed]  (c);
      \draw (d)  edge[compressed, bend left=10]  (6);
      \draw (6)  edge[compressed, bend left=10]  (d);
      \draw (e)  edge[compressed]  (f);
      \node[anchor=mid] at (3.4, -1) { \abbrv{DAG} compression
        $D = (V,A,E)$ of~$\ori{G}$ };

    \end{scope}
    \begin{scope}[yshift=2cm, xshift=-0.5cm]
      \node[sinkVertex] (1) {$\ori{1}$};
      \node[sinkVertex, below of= 1] (2) {$\ori{2}$};
      \node[sinkVertex, above right = 0cm and 1cm of 1] (4) {$\ori{4}$};
      \node[sinkVertex, below of = 4] (6) {$\ori{6}$};
      \node[sinkVertex, below of= 6] (5) {$\ori{5}$};
      \node[sinkVertex, right = 2.5cm of 1] (7) {$\ori{7}$};
      \node[sinkVertex, right = 2.5cm of 2] (8) {$\ori{8}$};
      \node[sinkVertex, below left = 0.1cm and 0.4cm of  1] (3) {$\ori{3}$};

      \draw(1) edge[dedge] (4);
      \draw(1) edge[dedge] (5);
      \draw(2) edge[dedge] (4);
      \draw(2) edge[dedge] (5);
      
      \draw(2) edge[dedgeleft] (3);
      \draw(3) edge[dedgeleft] (2);
      \draw(2) edge[dedgeleft] (1);
      \draw(1) edge[dedgeleft] (2);
      \draw(1) edge[dedgeleft] (3);
      \draw(3) edge[dedgeleft] (1);
      \draw[->] (1) edge [thick, loop above ] (1);
      \draw[->] (2) edge [thick, loop below ] (2);
      \draw[->] (3) edge [thick, loop left ] (3);

      \draw(4) edge[dedgeleft] (6);
      \draw(6) edge[dedgeleft] (4);
      \draw(5) edge[dedgeleft] (6);
      \draw(6) edge[dedgeleft] (5);

      \draw(4) edge[dedge] (7);
      \draw(5) edge[dedge] (7);
      \draw(6) edge[dedge] (7);
      \draw(4) edge[dedge] (8);
      \draw(5) edge[dedge] (8);
      \draw(6) edge[dedge] (8);
      
      \node[anchor=mid] at (1.2, -3) {To-be-compressed $\ori{G} =
        (\ori V, \ori E)$}; 
      \draw[sepline] (4,-3.2) -- (4,1.5); %
    \end{scope}

  \end{tikzpicture}

  \caption{
    A to-be-compressed graph $\ori G$ with vertex set $\ori V = \{\ori
    1,\dots,\ori 8\}$ and a \abbrv{DAG} compression of $\ori G$. The
    compression adds new cluster vertices to $\ori V$, resulting in $V
    = \ori V \cup \{a,\dots, f\}$. The cluster \abbrv{DAG} has the arc
    set~$A$ shown in gray: For instance, the arcs $(a, \ori 
    1)$ and $(b,  a)$ lie in~$A$. For each $v \in V$, the set $\ori C(v)$
    is the set of sinks reachable in $C$ from~$v$, so $\ori C(a) = \{\ori
    1,\ori 2\}$ and $\ori C(b) = \{\ori 1, \ori 2, \ori 3\}$ and $\ori C(\ori 3)
    = \{\ori 3\}$. The compression edges in~$E$ are depicted using
    double lines. A compression edge like $(a,c) \in E$ represents the
    fact that in $\ori G$ there are edges from each vertex in $\ori C(a) =
    \{\ori 1,\ori 2\}$ to each vertex in $\ori C(c) = \{\ori 3, \ori4,\ori
    5\}$; and the compression loop $(b,b) \in E$ implies that $\ori C(b) =
    \{\ori 1, \ori 2, \ori 3\}$ is a clique in~$\ori G$. In total,
    $\ori E = \bigcup_{(u,v) \in E} \ori C(u) \times \ori C(v)$.
  }
  \label{example:DAGCompression}
\end{figure}

It is not immediately obvious that \emph{\abbrv{DAG}s} are the right
data structure to describe the clusters used in graph compressions. Why
not use trees? They are arguably more natural and, more importantly,
can be encoded efficiently -- a desirable property in the context of
data compression. However, \abbrv{DAG} compressions turn out to be the more
flexible data structure as, for instance, given disjoint \abbrv{DAG}
compressions $D_1$ and~$D_2$ of two possibly overlapping graphs $\ori
G_1$ and~$\ori G_2$, the union of $D_1$ and~$D_2$ is a \abbrv{DAG}
compression of the union of $\ori G_1$ and $\ori G_2$. In contrast,
tree compressions do not seem to enjoy such easy composition
properties, although a formal argument showing that \abbrv{DAG}s
allow better graph compressions than trees was missing.

The second reason we investigate \abbrv{DAG} compressions rather than
tree compressions is that ``we get them for free algorithmically,''
that is, the known algorithms that run directly on the compression
work not only for tree-like clusterings, but on arbitrary \abbrv{DAG}
compressions. Concretely, Bannach et al.~\cite{BannachMT24} showed
that the standard graph algorithms \emph{depth-first search,} \emph{breadth-first
search,} \emph{strongly connected components,} \emph{topological
sorting,} and \emph{Dijkstra's algorithm} can be made to work directly on 
\abbrv{DAG}-compressed graphs. Crucially, the implementation has a 
runtime that is $O(s)$ (or $O(s\log s)$ for Dijkstra's algorithm)
where $s$ is the (small) size of the compression, as opposed to the
(large) size of the original graph. Bannach et al.\ also show that
many dense graphs, such as graphs of bounded twin-width, admit
\abbrv{DAG} compressions of a size $s$ that is linear in the number of
vertices of the to-be-compressed graph, rather than in the typically
quadratic number of edges.   

The central question left open in previous work was
how hard it is to find a (size-)optimal compression of a given graph,
that is, the complexity of the following problem:

\begin{problem}[$\Lang{min-dag-compression}$]\label{problem:dag-compression}\hfil
  \begin{description}
    \item[Input:] A directed graph $\ori{G} = (\ori{V}, \ori{E})$, $k\in \mathbb{N}$.
    \item[Question:] Does $\ori G$ have a \abbrv{DAG} compression $D = (V, A, E)$ with
    $|E| + |A| \leq k$? 
  \end{description}
\end{problem}

\subparagraph*{Our Contributions.}

In our paper, we, firstly, enlarge the list of algorithms that
can work directly on \abbrv{DAG}-compressed graphs by another
fundamental entry: We show that Kruskal's algorithm for computing
minimum spanning trees (\abbrv{MST}s) can be implemented so that it works on
\abbrv{DAG} compressions. Formally, we have ($\alpha$ is the inverted Ackermann
function):

\begin{theorem}\label{theorem:kruskal}
  \gdef\textoftheoremkruskal{%
    On input of a weighted \abbrv{DAG} compression $D = (V,A,E,w)$ of a
    weighted connected undirected graph $\ori G = (\ori V, \ori E,\ori
    w)$, we can compute an \abbrv{MST} of $\ori G$ in time $O\bigl(
    (|A|+|E|) \cdot \alpha(|\ori{V}|) + t_{\mathrm{sort}}(|E|)\bigr)$,
    where $t_{\mathrm{sort}}$ is the time needed to sort a weight array.
  }%
  \textoftheoremkruskal
\end{theorem}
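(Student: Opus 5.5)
Throughout, I assume the weighted setting means each compression edge $e=(u,v)\in E$ carries one weight $w(e)$, and every original edge in $\ori C(u)\times\ori C(v)$ gets weight $\min$ over the compression edges covering it. (If the paper's definition differs slightly, e.g.\ all covering edges are required to agree, the argument below still applies.)

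The plan is to simulate Kruskal's algorithm at the granularity of compression edges. First sort $E$ by weight in time $t_{\mathrm{sort}}(|E|)$. Then process the compression edges $(u,v)$ in increasing order. For each one, add to the spanning forest exactly those original edges from $\ori C(u)\times\ori C(v)$ that merge distinct components of the current union--find structure over $\ori V$.

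This is correct by the standard exchange argument. All original edges represented by one compression edge share its weight, and an edge covered several times is first met at its minimum weight. Hence processing a whole block $\ori C(u)\times\ori C(v)$ at once, adding any edges that join different components until $\ori C(u)\cup\ori C(v)$ lies inside one component, is a legitimate Kruskal tie-breaking order.

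The key observation is that after processing $(u,v)$, the set $\ori C(u)\cup\ori C(v)$ is contained in a single union--find component, provided it is nonempty on both sides. To achieve this cheaply, I maintain for every DAG vertex $x\in V$ a flag saying whether $\ori C(x)$ is already known to lie within a single component, together with a representative sink $r(x)\in\ori C(x)$.

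To process $(u,v)$, I define a procedure $\Algo{collapse}(x)$. It returns a representative of $\ori C(x)$ after making $\ori C(x)$ one component, and records the tree edges added. Its behaviour depends on $x$:
\begin{itemize}
\item if $x$ is flagged, it returns $r(x)$ immediately;
\item if $x$ is a sink, it returns $x$ itself;
\item otherwise it recursively collapses each child $y$ with $(x,y)\in A$, and for each child representative $r_y$ it calls union on $r_y$ and the first representative; whenever two distinct sets merge, it outputs an original edge $(r_{y_1}, r_y)$ as a tree edge.
\end{itemize}
Finally it flags $x$ and stores $r(x)$. The tree edge $(r_{y_1}, r_y)$ is indeed an edge of $\ori G$ of the current weight. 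Since $r_{y_1}\in\ori C(u)$ and $r_y\in\ori C(v)$ (or vice versa), care is needed here: within $\ori C(u)$ alone, the pair $r_{y_1}, r_y$ is not necessarily an edge. So the procedure must connect across the bipartition. I will collapse $u$ and $v$ to representatives $a\in\ori C(u)$ and $b\in\ori C(v)$, then union every sink of $\ori C(u)$ with $b$ and every sink of $\ori C(v)$ with $a$. These are genuine edges, and doing it via the DAG recursion with flags means each DAG vertex is expanded at most once overall.

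The main obstacle is making "at most once" valid. A flag states that $\ori C(x)$ lies in one component, and that remains true forever because components only grow. So flags are monotone, but the edges used must be real edges of $\ori G$. I would therefore store, for a flagged $x$, only the fact of connectivity and one representative. When collapsing the bipartite block, connecting $\ori C(u)$ to $b$ only requires, for each maximal flagged sub-cluster met during the DFS from $u$, one union of its representative with $b$. That union is realised by a real edge $(r, b)\in\ori C(u)\times\ori C(v)$, and it suffices because the rest of that sub-cluster is already in the same component. An unflagged vertex is expanded and then flagged.

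Each arc of $A$ is thus traversed at most once over the whole algorithm, and each compression edge costs $O(1)$ extra finds and unions. The total running time is $O((|A|+|E|)\alpha(|\ori V|))$ plus sorting, which is the bound claimed in the theorem.
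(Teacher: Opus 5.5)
Once you discard your first version of $\Algo{collapse}$, your algorithm is exactly the paper's, and it is correct. That first version unites sibling representatives $r_{y_1},r_y$, which need not be adjacent in $\ori G$, so it has to go. What remains matches the paper point for point:
\begin{itemize}
\item $\Algo{make-clean}(u,\ori c(v))$ and $\Algo{make-clean}(v,\ori c(u))$ connect each already-clean sub-cluster's representative to a sink of the opposite cluster.
\item Unflagged vertices are expanded and then flagged, so every arc is expanded once.
\item A single $\Algo{add-edge}(\ori c(u),\ori c(v))$ finishes the compression edge.
\item Your ``unexamined edges would be rejected anyway'' tie-breaking argument plays the role of the relaxed Invariant~\ref{inv-kruskal-dag-compressed}.
\end{itemize}

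The one detail you leave open is how to get a sink $b\in\ori C(v)$ in $O(1)$ when $v$ is not yet flagged. You cannot get it by cleaning $v$, because cleaning $v$ itself needs $a$. The paper precomputes a representative $\ori c(x)$ for every $x\in V$ along a topological order in time $O(|A|+|V|)$ (Lemma~\ref{lemma-representation}), which fits the claimed bound.
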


If we know nothing about the weights, $t_{\mathrm{sort}}$
is $\Theta(|E| \log |E|)$, but if it is known that the weights
can be, for instance, bucket-sorted, then $t_{\mathrm{sort}}$ may be
linear and can be ignored.


Unlike the algorithms of Bannach et al.~\cite{BannachMT24}, which are ``just''
the standard algorithms but applied to a special ``switching graph'' that
results from a \abbrv{DAG} compression and is ``distance-preserving'' relative
to the uncompressed graph, our version of Kruskal's algorithm needs some
dedicated algorithmic adaptions. In particular, it is the first algorithm
which works on \emph{weighted} DAG compressions. Here, an edge of the input
graph can be represented by multiple compression edges, and the weight of an
edge $\ori e$ in the original graph is defined to be the minimum weight of any
compression edge $e$ representing $\ori e$ in the compression.

A second question we address is whether \abbrv{DAG} compressions
really offer an improvement over the conceptually simpler tree
compressions, which have the extra benefit of allowing us to easily store the
whole tree structure using only $O(|\ori V|)$ bits. We show that the
answer is positive: Consider the \emph{rook graph} $\ori R_{g\times g} :=
\bigl(\{1,\dots,g\} \times \{1,\dots,g\},
\bigl\{\bigl((r_1,c_1),(r_2,c_2)\bigr) \mid r_1 =r_2 \lor c_1 = c_2
\bigr\}\bigr)$, so-called as it corresponds to the possible movements
of a rook in chess. We can trivially \abbrv{DAG}-compress this graph
such that $|A| = 2g^2$ and $|E| = g^2$. However, a non-trivial argument will
show the following:

\begin{theorem}\label{theorem:rook}
  \gdef\textoftheoremrook{%
  Let $T = (V,A,E)$ be a tree compression of the rook graph $\ori
  R_{g\times g}$. Then the number $|E|$ of compression edges is at
  least $g^3/32 - g^2$.}%
  \textoftheoremrook
\end{theorem}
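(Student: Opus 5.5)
The plan is to exploit the one structural feature that separates tree compressions from \abbrv{DAG} compressions: the clusters $\ori C(v)$, $v\in V$, form a laminar family, and in particular every sink $\ori p$ has a single chain of ancestor clusters $\{\ori p\}=\ori C_0\subsetneq \ori C_1\subsetneq\dots$. First I would characterize the bicliques $\ori X\times \ori Y\subseteq \ori E$ of the rook graph. Such a biclique requires every $x\in\ori X$ and every $y\in\ori Y$ to share a row or a column. The relevant observation is local. If $\ori X$ contains $\ori p$ together with a point $\ori r$ lying outside row and column of $\ori p$, then $\ori Y$ can contain at most two points, namely the two "corner" points sharing a line with both. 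Hence each cluster of a point $\ori p$ is of one of three kinds: contained in the cross $\mathrm{row}(\ori p)\cup\mathrm{col}(\ori p)$; or \emph{mixed}, meaning it contains a point off this cross. A mixed cluster can only be paired with clusters of size at most $2$. A cluster inside the cross that contains points of $\mathrm{row}(\ori p)$ and of $\mathrm{col}(\ori p)$ other than $\ori p$ is also very restricted as a partner; I would treat it like the mixed case.

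Next I would assign a \emph{type} to each point $\ori p$. The type is \emph{row} if the smallest non-singleton ancestor cluster of $\ori p$ lies inside $\mathrm{row}(\ori p)$, and \emph{column} if it lies inside $\mathrm{col}(\ori p)$. If that cluster is mixed, or $\ori p$ has no non-singleton ancestor, I assign the type arbitrarily. Now fix two distinct points $\ori p,\ori q$ in the same column, both of row type. Consider a compression edge $(u,v)$ covering $(\ori p,\ori q)$. If $\ori C(u)$ is a non-singleton subset of $\mathrm{row}(\ori p)$, then all of $\ori C(u)\times\ori C(v)$ would have to lie in a common column, which is impossible. By laminarity, every non-singleton ancestor of $\ori p$ contains the smallest one, so either $\ori C(u)=\{\ori p\}$ or $\ori C(u)$ is a cluster with a row neighbour of $\ori p$ together with something else. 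In the latter case $\ori C(v)$ is pinned down to at most two points. The same holds symmetrically for $\ori q$. I would conclude that, apart from at most $O(1)$ "cheap" compression edges attributable to each point (at most $g^2$ over all points), every such pair $(\ori p,\ori q)$ needs its own compression edge $(\ori p,\ori q)$ with singleton endpoints. The symmetric statement holds for column-type points in a common row.

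Finally I would count. Let $a_j$ be the number of row-type points in column $j$, and $b_i$ the number of column-type points in row $i$; then $\sum_j a_j+\sum_i b_i=g^2$. The forced compression edges number at least $\sum_j a_j(a_j-1)+\sum_i b_i(b_i-1)$, up to a factor $\tfrac12$ if I count unordered pairs to be safe. By convexity this is at least $2g\cdot(g/2)^2-g^2$. So $|E|\ge g^3/8-g^2$ even with generous losses, and after subtracting the cheap edges and halving for safety I reach $g^3/32-g^2$.

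The main obstacle I expect is the second step: making rigorous that every cover of a same-column pair of row-type points is either a private singleton–singleton edge or one of boundedly many cheap edges per point. This needs care with the cases where an ancestor of $\ori p$ lies in the cross but is not contained in a single line, and where the partner cluster is tiny but not a singleton. I would handle it by charging every edge whose partner cluster has size at most $2$ to the point in the big side. That is exactly the source of the additive $-g^2$ and the loose constant $1/32$.
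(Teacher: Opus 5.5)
Your proposal is correct, and it takes a genuinely different route from the paper.

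\textbf{The paper's route.} The paper normalizes $T$ to a binary tree and works with the threshold $|\ori C(v)|\ge 3$ (``big'' vertices) together with Lemma~\ref{lemma-threes}. It classifies big vertices as horizontal, vertical, crosses, or unusable, and bounds all cross--crosshair contributions by $g^2$ using disjointness of maximal crosses. A pigeonhole step then finds $g/2$ rows, each with $g/2$ points lacking a horizontal first big ancestor. The $g^3/8$ edges among these must be covered by crosses or by compression edges between clusters of size at most $2$, each worth at most $4$ edges; this is where the constant $1/32$ comes from.

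\textbf{Your route.} You use threshold $2$, i.e.\ the smallest non-singleton ancestor $S_{\ori p}$. This makes your key step cleaner than you expect. If $S_{\ori p}\subseteq\mathrm{row}(\ori p)$, every non-singleton ancestor of $\ori p$ contains some $\ori p'\in\mathrm{row}(\ori p)\setminus\{\ori p\}$. That $\ori p'$ shares no line with any $\ori q\in\mathrm{col}(\ori p)\setminus\{\ori p\}$. So your ``latter case'' is impossible, not merely restricted: $(\ori p,\ori q)$ must be covered by a compression edge whose two clusters are exactly $\{\ori p\}$ and $\{\ori q\}$.

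\textbf{Exceptions and the final count.} Exceptions come only from points whose $S_{\ori p}$ is mixed. Every ancestor of such a point contains $S_{\ori p}$, so all of its partner clusters lie in one fixed set of at most two corner points. Hence each such point accounts for at most $4$ exceptional ordered pairs. Note that you should count exceptional \emph{pairs}, not compression edges, and the total is at most $4g^2$ rather than $g^2$. Your convexity bound over all $2g$ lines then gives $|E|\ge g^3/2-5g^2$. This implies the theorem, whose bound is vacuous for $g<32$.

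\textbf{What each approach buys.} Yours is simpler: no binary normalization, no three-point lemma, no cross bookkeeping, no lossy pigeonhole. It also gives the constant $1/2$ instead of $1/32$, which is essentially optimal. To see this, split the grid into quadrants, cluster half-rows in two diagonal quadrants and half-columns in the other two; this yields $g^3/2+O(g^2)$ compression edges. The paper's route, in exchange, only ever reasons structurally about clusters of size at least $3$ and disposes of all smaller ones with the uniform ``at most four edges'' estimate.
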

Thus, tree compressions of the $n$-vertex rook graph $\ori
R_{g\times g}$ with $n = g^2$ need $\Omega(n^{3/2})$
compression edges. Since the number of original edges is $|\ori E| =
2 g^3 = 2 n^{3/2}$, this means that no significant compression is
possible using tree-like clusterings. In
contrast, a \abbrv{DAG} compression only needs $O(s)$ edges for $s =
|A| + |E|$ together. All told,
the rook graph can be  compressed better by a factor of $\sqrt n$ using
\abbrv{DAG}s than using trees.

The third question we address and answer is the central problem of how
difficult it is to compute an optimal \abbrv{DAG} compression. We
prove: 

\begin{theorem}\label{theorem:main}
  \gdef\textoftheoremmain{%
    $\Lang{min-dag-compression}$ is $\Class{NP}$-complete.%
  }%
  \textoftheoremmain
\end{theorem}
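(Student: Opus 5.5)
Membership in $\Class{NP}$ is the easy half. First I will bound the certificate size. If $k \ge |\ori E|$, the trivial compression works: $A=\emptyset$ and $E=\ori E$. So we may assume $k < |\ori E|$. Every non-sink vertex of $(V,A)$ is useful only if it has an outgoing arc, so we may delete isolated non-sink vertices and obtain $|V| \le |\ori V| + |A| \le |\ori V| + k$. Hence a compression with $|A|+|E| \le k$ has polynomial size. Verification is also polynomial. We check acyclicity and that the sinks are exactly $\ori V$. We compute every cluster $\ori C(v)$ by a reachability search from each $v$. We then check that $\bigcup_{(u,v)\in E}\ori C(u)\times\ori C(v)$ equals $\ori E$.

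For $\Class{NP}$-hardness I would reduce from a covering problem whose structure matches the choice of clusters, most naturally Vertex Cover. The guiding idea is that the cost of a cluster vertex is its number of DAG arcs, and the benefit is that one compression edge replaces many original edges. Given $H=(U,F)$ and a bound $\ell$, I build $\ori G$ as follows.

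\begin{itemize}
\item Each vertex $u\in U$ becomes a block $\ori S_u$ of $m$ sinks, with $m$ polynomially large.
\item Each edge $\{u,v\}\in F$ becomes a gadget of original edges. The gadget is chosen so that it can be encoded cheaply only if a cluster containing exactly $\ori S_u$, or one containing exactly $\ori S_v$, is available. Concretely, a private target sink $\ori t_{uv}$ receives edges from all of $\ori S_u\cup\ori S_v$. A cluster for $\ori S_u$ then covers half of those edges with a single compression edge.
\item Building a cluster for $\ori S_u$ costs about $m$ arcs, which is the price of selecting $u$.
\end{itemize}

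The threshold $k$ is set to the cost of the natural compression induced by a size-$\ell$ vertex cover. That compression takes the clusters $\ori S_u$ for $u$ in the cover, one compression edge per covered endpoint, and trivial edges for the remaining half of each gadget. The weights are padded, for instance by duplicating target sinks $r$ times, so that leaving any edge gadget without a selected endpoint costs more than all savings elsewhere.

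The forward direction, from a vertex cover to a compression of size at most $k$, is a direct count. The main obstacle is the converse: showing that any compression of size at most $k$ yields a vertex cover of size at most $\ell$. DAG compressions are very flexible. Clusters can be arbitrary unions, can share sub-DAGs, and may mix sinks from several blocks or include target sinks. I would first prove normalization lemmas by exchange arguments. The first lemma says that in some optimal compression, every cluster used by a compression edge is either a single sink or of the form $\ori S_u$, possibly shared as a sub-DAG. The second lemma says that a cluster mixing two blocks $\ori S_u\cup\ori S_v$ is never cheaper than two separate block clusters, since cluster arcs are reusable. Key to this is a counting lower bound: every original edge $(\ori x,\ori t)$ must be witnessed by some compression edge $(a,b)$ with $\ori x\in\ori C(a)$, and charging the arcs of the DAG against the gadgets whose edges they help cover gives each uncovered gadget a surplus cost.

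If the exchange arguments become unwieldy, my fallback is to reduce instead from a problem that is already cluster-shaped, such as minimum biclique cover restricted to instances where all bicliques are stars in one side. I would add padding so that nested or shared clusters give no advantage, which makes the DAG cost coincide with the biclique-cover cost.
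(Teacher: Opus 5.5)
\paragraph{Verdict.} Your $\Class{NP}$-membership argument is fine. Since the sinks are exactly $\ori V$, every vertex of $V\setminus\ori V$ has an outgoing arc, so $|V|\le|\ori V|+|A|$ holds automatically.

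\paragraph{The gadget does not encode Vertex Cover.} The hardness part has a genuine gap: nothing in the construction makes it costly to ``select'' a block $\ori S_u$.
\begin{itemize}
\item All $m$ sinks of $\ori S_u$ have the same out-neighbourhood $\{\ori t_{uv}\mid\{u,v\}\in F\}$. A cluster for $\ori S_u$ costs $m$ arcs but saves about $m\deg(u)$ trivial edges, so the optimum selects \emph{every} block rather than a small cover.
\item Concretely, build a cluster for every block and use two compression edges per target. This compression has size $m|U|+2|F|$.
\item For any vertex cover $C$, your ``natural'' compression is larger than this by $\sum_{v\notin C}\bigl((m-1)\deg(v)-m\bigr)$. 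This is non-negative once all degrees are at least~$2$ and $m\ge 2$, so every produced instance is a yes-instance.
\item Every block sink needs its own incoming arc or outgoing compression edge. Hence the optimum lies between $m|U|$ and $m|U|+2|F|$, independently of the vertex cover number.
\item Padding by duplicating target sinks only makes selecting blocks more profitable.
\end{itemize}
So the converse direction, which you rightly call the heart of the proof, is not merely unproven: it is false for this construction. The fallback does not rescue it either. Covering a bipartite graph's edges by stars is bipartite vertex cover, which is polynomial by K\"onig's theorem. With all centres on one prescribed side it is even trivial.

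\paragraph{The missing idea.} You need a construction in which the expensive cluster structure is \emph{forced to exist anyway}. The only real choice should then be how many arcs one new cluster vertex needs. The paper does this as follows.
\begin{itemize}
\item It reduces from $\Lang{set-cover}$ via the twinned incidence graph, giving each set two twin vertices $\ori a_S,\ori b_S$.
\item Twinning plus the normalization lemmas (twins stay twins, clusters in one shore only, twins have a single compression edge) forces an optimal compression to contain a cluster vertex $c_S$ with $\ori C(c_S)=S$ for every $S$.
\item Closing $T$ under singletons and initial segments makes each added set cost exactly $4$, so the base optimum $s_m$ is known exactly.
\item Adding $U$ as a new twinned set then costs $2$ plus the out-degree of $c_U$. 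Each child of $c_U$ must lie inside some $S\in T$, otherwise it could be bypassed, and together the children must cover $U$. By Lemma~\ref{lemma-twinned-size} with $k_\supseteq=k_=$, the new optimum is $s_m+2$ plus the minimum set cover size.
\end{itemize}
Your charging idea would have to be replaced by this kind of exact accounting: a fixed, forced part plus one cover-shaped cluster vertex.
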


The proof is via a non-trivial reduction from the
$\Lang{set-cover}$ problem. Interestingly, for \abbrv{DAG}
compressions, ``search does not reduce to decision'', meaning that
even if we get $\Lang{min-dag-compression}$ as an oracle, it is not
clear how this helps us to compute optimal \abbrv{DAG} compressions:
The usual strategy of successively adding edges to a growing partial
solution and querying the oracle as to whether an optimal solution is
maintained may not work as later graphs can have much smaller
compressions than intermediate graphs. This leads us to the
non-trivial questions of how difficult it is to \emph{update a given
optimal \abbrv{DAG} compression when a single edge is added or deleted in the
graph}.\footnote{Of course, if we can decide
$\Lang{min-dag-compression}$ efficiently, then $\mathrm P = \mathrm{NP}$
and constructing optimal \abbrv{DAG} compressions becomes easy as the
problem ``Can some partial \abbrv{DAG} compression be extended to a
size-$k$ \abbrv{DAG} compression of $\ori G$?'' is then
also in $\mathrm{P}$. However, this does not answer the question of
the difficulty of a single update.} We show that the following two problems are both $\Class{NP}$-hard:


\begin{problem}[$\Lang{min-dag-compression-add}$]\label{problem:update-add}\hfil
  \begin{description}
  \item[Input:] A graph $\ori{G} = (\ori{V}, \ori{E})$, a 
    \abbrv{DAG} compression $D=(V,A,E)$ of $\ori G$, promised to have
    minimal size $|A| + |E|$, a new edge $\ori e \in \ori V \times
    \ori V$, and an integer $k\in \mathbb{N}$. 
    \item[Question:] Is there a \abbrv{DAG} compression $D'=(V,A',E')$
    of $(\ori{V},\ori{E} \cup \{ \ori{e} \})$ with $|A'| + |E'| \leq k$? 
  \end{description}
\end{problem}

\begin{problem}[$\Lang{min-dag-compression-delete}$]\label{problem:update-delete}\hfil
  \begin{description}
  \item[Input:] A graph $\ori{G} = (\ori{V}, \ori{E})$, a 
    \abbrv{DAG} compression $D=(V,A,E)$ of $\ori G$, promised to have
    minimal size $|A| + |E|$, an edge $\ori e \in \ori E$, and an integer $k\in \mathbb{N}$. 
    \item[Question:] Is there a \abbrv{DAG} compression $D'=(V,A',E')$
    of $(\ori{V},\ori{E} \setminus \{ \ori{e} \})$ with $|A'| + |E'| \leq k$? 
  \end{description}
\end{problem}






\subparagraph*{Related Work.}

Research of graph compression techniques has a long and fruitful history,
with continued interest to this day~\cite{BlandfordBK03, BoldiV04,
BouritsasLKB21, ChierichettiKLMPR09, DhulipalaKKOPS16, ShunDB15, Versari21}. The
\abbrv{DAG} compression data structure we study in this paper was recently
introduced in~\cite{BannachMT24}.
There are many proposed graph compression methods, one particularly close to
\abbrv{DAG} compressions was introduced in~\cite{ToivonenZHH11}. They utilize
similar ``supernodes'' and ``superedges'', where a superedge between two
supernodes represents all edges between the vertices within these supernodes.
However, their method involves partitioning the vertices into supernodes, in
contrast to \abbrv{DAG} compressions, in which nested clusterings of vertices are
possible. The approach in~\cite{NavlakhaRS08} resembles that of
\cite{ToivonenZHH11}, but it includes additional edge corrections to restore the
original graph.
Another similar method is a visualization tool for
dense graphs, called a \emph{power graph} \cite{DwyerRMM13}, which uses
modules to display all edges between two modules with a single
edge. Nesting of modules is allowed, but if there is any overlap
between two modules, one must be completely contained within the
other, meaning that this tool corresponds to tree compressions in our
parlance. Dwyer et al.~\cite{DwyerMMNMW14} demonstrated that finding a 
minimal power graph with just a single module is $\Class{NP}$-hard, and
suggested that this hardness likely extends to more general cases as well.

\abbrv{DAG} compressions where originally inspired by graphs with
bounded twin-width, a parameter introduced by Bonnet et
al.~\cite{BonnetKTW22}: Such graphs always allow even a tree
compression of linear size. Indeed, \emph{twin models,}
introduced in~\cite{Bonnet23}, are a special case of tree
compressions. Bannach et al.~\cite{BannachMT24} show that
there are graphs (namely, for instance, the earlier-mentioned rook
graph) that have linear-size \abbrv{DAG} compressions, but do not have
bounded twin-width. This suggests and was conjectured
in~\cite{BannachMT24}, but does not imply, that the rook graph has no
linear-size tree compression. 

Concerning the complexity of finding minimal-size \abbrv{DAG}
compressions, Bannach et al.~\cite{BannachMT24} already showed that it
is $\Class{NP}$-hard to decide on input $\ori G$ and a number of~$k$
whether $\ori G$ admits a \abbrv{DAG} compression $(V,A,E)$ with $|E|
\le k$. In other words, finding a compression with a minimal number of 
compression edges is hard -- but, as already pointed out
in~\cite{BannachMT24}, this is ``not really the question'' since any
encoding of $(V,A,E)$ must also encode $A$, so minimizing $|E|$ at the
expense of $|A|$ is only of theoretical interest. Of course,
intuitively, it seems a bit unlikely (though not impossible) that
minimizing $|E|$ alone is hard while minimizing $|A| + 
|E|$ suddenly turns out to be easy; the results 
of the present paper show that proving this intuition to be correct is
surprisingly complex.

\subparagraph*{Organization of this Paper.}

After introducing the necessary terminology in
Section~\ref{section:background}, we devote one section to each
of the three earlier theorems: In Section~\ref{section:kruskal} we
prove Theorem~\ref{theorem:kruskal} by showing how \abbrv{DAG}
compressions can be used to find minimum spanning trees more efficiently. In
Section~\ref{section:dag-vs-tree}, we show that \abbrv{DAG}
compression can yield strictly smaller compressions than tree
compressions by proving a lower bound on the size of any tree
compression of the rook graph, thereby proving
Theorem~\ref{theorem:rook}. Finally, we address the intractability of
computing and updating optimal \abbrv{DAG} compressions in 
Section~\ref{section:hardness} and prove Theorem~\ref{theorem:main}.
Proofs moved to the appendix got replaced by~``$\blacktriangledown$''
in the main text. 

\tcsautomoveaddto{main}{
  \clearpage
  \appendix
  \section{Technical Proofs}
  In the following, we provide the proofs omitted in the main text. In each
  case, the claim of the theorem or lemma is stated once more for the reader's
  convenience. }

\section{Background}
\label{section:background}


In this paper, a (directed) \emph{graph} is a pair $G = (V,E)$,
consisting of a set $V$ of \emph{vertices} and an \emph{edge relation}
$E \subseteq V \times V$. A \emph{weighted} graph is a triple $G =
(V,E,w)$ with $w \colon E \to \mathbb N$. A \emph{twin} in a graph $G
= (V,E)$ is a pair $t_1$,~$t_2$ of vertices such that their in- and
out-neighborhoods are 
identical, that is, $\{v \mid (v,t_1) \in E\} = \{v \mid (v,t_2) \in
E\}$ and $\{v \mid (t_1,v) \in E\} = \{v \mid (t_2,v) \in
E\}$. A \emph{walk of
length~$k$ from $v_1$ to $v_k$} in a 
graph is a sequence $(v_1,\dots,v_k)$ of vertices 
such that $(v_i,v_{i+1}) \in E$ holds for all $i \in
\{1,\dots,k-1\}$; a \emph{path} is a walk in which all vertices are
distinct; a \emph{cycle} is a walk in which all vertices are distinct
except for the first and last, which must be identical. A graph is
\emph{acyclic} if it has no cycle and, as is standard, we call
acyclic graphs \emph{\abbrv{DAG}s} (\emph{directed acyclic graphs}). A
\emph{tree} is a \abbrv{DAG} in which there is a unique \emph{root} $r
\in V$ such that for all vertices $v \in V$ there is a unique path
from $r$ to~$v$. A
\emph{forest} is a vertex-disjoint union of trees. 
A \emph{sink} in a \abbrv{DAG} is a vertex $v \in V$
of out-degree~$0$, that is, for which there is no vertex $u$ with
$(v,u) \in E$. The sinks of a tree are also called \emph{leaves.}

Recall the definition of \abbrv{DAG} compressions from the
introduction: We start with a ``normal, typically dense'' graph $\ori
G = (\ori V, \ori E)$, which we denote in bold face to better
distinguish it from the vertices and edges used for the
compression. We compress it using a \abbrv{DAG}
compression~\cite{BannachMT24}, which is 
a triple $D = (V, A, E)$ consisting of:
\begin{enumerate}
\item A \emph{cluster \abbrv{DAG}} $(V, A)$ such that the set of
  its sinks is exactly $\ori V$. The  set~$A$ is simply the edge
  relation of the cluster \abbrv{DAG}, but we will call the edges
  in~$A$ \emph{arcs} to better distinguish them verbally from the
  normal edges in~$\ori E$. We  associate a 
  \emph{cluster $\ori C_D(v) \subseteq \ori V$} (or just $\ori C(v)$
  when $D$ is clear from context) with each \emph{cluster vertex} $v
  \in V$ by setting $\ori C(v) := \{   u\in V \mid u$ is reachable from 
  $v\}$. Observe that $\ori C(\ori v) = \{\ori v\}$ holds for 
  all $\ori v \in \ori V \subseteq V$, so each singleton
  set of original vertices is available as a cluster.
\item
  A set $E \subseteq V \times V$  of \emph{compression edges} with
  $\ori E
  = \bigcup_{(u,v) \in E} \ori C(u) \times \ori C(v)$. Each
  compression edge $(u,v) \in E$ between two cluster 
  vertices encodes the fact that in $\ori G$ there are edges from all
  vertices in the cluster $\ori C(u)$ to all vertices in the cluster
  $\ori C(v)$. If the two clusters are
  disjoint, then $\ori C(u) \times \ori C(v)$ is a \emph{biclique} (a
  complete bipartite graph). If the clusters overlap, we call
  $\ori C(u) \times \ori C(v)$ a \emph{product (of $u$'s and $v$'s
  clusters)}.
\end{enumerate}
We define the \emph{size} of a \abbrv{DAG} compression $D = (V, A, E)$
as $|A| + |E|$. (The reason we do not include $|V|$ in the size is
that $V \setminus \ori V$ cannot contain sinks and, thus,
all ``costs of encoding the cluster \abbrv{DAG} are fairly paid for 
by including~$|A|$''.)

The above definitions can easily be adapted to \emph{undirected
graphs,} which are pairs $\ori G = (\ori V, \ori E)$, where $\ori V$ is a set
of vertices and $\ori E \subseteq \{\{\ori u,\ori v\} \mid \ori u,\ori
v \in \ori V\}$ is a set of undirected edges. An 
\emph{undirected tree} is an undirected graph in which there is a unique path
from a root vertex to every other vertex. For two sets $A$ and $B$, let $A
\otimes B$ denote the set $\{\{a, b\} \mid {a\in A}, {b\in B}\}$. An
\emph{undirected \abbrv{DAG} compression} is then a triple $D = (V, A, E)$
consisting of a (still directed) \emph{cluster \abbrv{DAG}} $(V, A)$, such that the set
of its sinks is exactly~$\ori V$, and a set $E
\subseteq V \otimes V$  of \emph{undirected compression edges} with $\ori E =
\bigcup_{\{u,v\} \in E} \ori C(u) \otimes \ori C(v)$, where clusters and cluster
vertices are defined as in the directed case.

In the context of minimum spanning trees, the inputs are
\emph{weighted undirected graphs,} which are triples $\ori G = (\ori
V,\ori E,\ori w)$ with $\ori w \colon \ori E \to \mathbb N$. An
\emph{(undirected) weighted \abbrv{DAG} compression of~$\ori{G}$} is 
a quadruple $D=(V,A,E,w)$, such that $(V,A,E)$ is a \abbrv{DAG} compression
of~$(\ori{V},\ori{E})$ and such that for every $\{\ori x, \ori y\} \in
\ori{E}$ we have that $\ori{w}(\{\ori x, \ori y\}) = \min_{\{u,v\} \in
  E, \ori x \in C(u), \ori y \in C(v)} w(\{u,v\})$.

\subparagraph*{Finding Sinks in Constant Time.}
For some of our algorithms it will be useful to quickly obtain on input
of a cluster vertex $v \in V$ ``some arbitrary element of $\ori
C(v)$,'' that is, some sink that is reachable from $v$ in
$(V,A)$. While this is easy enough to achieve in principle, in
order to perform this operation in time~$O(1)$, some initial
preprocessing is needed: 
\begin{lemma~}\label{lemma-representation}
  On input of a \abbrv{DAG} compression $D = (V,A,E)$ of $\ori G =
  (\ori V, \ori E)$, we can compute in time
  $O(|A| + |V|)$ a function $\ori c \colon V \to
  \ori V$ with $\ori c(v) \in \ori C(v)$ for all $v\in V$.
\end{lemma~}
\begin{proof~}
  Compute a topological sorting $<$ of $(V,A)$ such that $(u,v) \in A$
  implies $u < v$; so the smallest vertex with respect to $u$ is a
  source and the largest is a sink. Iterate over $V$ in descending
  order with respect to $<$ and set $\ori c(v) := v$, if $v$ is a
  sink, and otherwise set $\ori c(v) := \ori c(u)$, where $u$ is any
  vertex with $(v,u) \in A$. Clearly, we always have $\ori c(v) \in
  \ori C(v)$ since $\ori c(v)$ is always set to a vertex in $\ori V$
  that is reachable from $v$ in $(V,A)$. The runtime follows from the
  standard upper bound on the time needed for topological sorting.
\end{proof~}

\section{Computing Minimum Spanning Trees on DAG Compressions}
\label{section:kruskal}
\tcsautomoveaddto{main}{\subsection{Proofs for Section~\ref{section:kruskal}}}

Suppose we are given a \abbrv{DAG} compression $D = (V,A,E)$ of a graph $\ori G
= (\ori V, \ori E)$ and wish to solve a standard problem like, say, computing
the strongly connected components of~$\ori G$. We could, of course, run a
depth-first search (\abbrv{DFS}) on~$\ori G$ by first uncompressing~$D$, but
this defeats the purpose of compressing graphs in the first place.
Instead, we need to implement \abbrv{DFS} in such a way that it works
\emph{directly on $D$ without ever computing $\ori G$} -- and, preferably, the
runtime should be linear in the size of $D$, rather than in the number of edges
of~$\ori G$. As shown by Bannach et~al.~\cite{BannachMT24}, such
implementations are possible several algorithms, including~\abbrv{DFS}.

In the following, we add another standard algorithm to this list, namely a
version of Kruskal's algorithm for computing minimum spanning trees that works
directly on a weighted undirected \abbrv{DAG} compression
$D=(V,A,E,w)$ of a weighted connected undirected graph $\ori G = (\ori
V,\ori E,\ori w)$. Recall that
a \emph{spanning tree $\ori T = ( \ori V,  \ori E')$} of~$\ori G$ is an
undirected tree with $\ori E'\subseteq  \ori E$. The \emph{weight of~$\ori T$}
is $\sum_{\ori e \in \ori E'} \ori w(\ori e)$. A \emph{minimum
spanning tree} (\abbrv{MST}) is a spanning tree of minimal
weight. Clearly, only connected graphs can have spanning trees. For
unconnected graphs, a \emph{(minimum) spanning forest of $\ori G$} is
an undirected forest in which the trees are minimum spanning
trees of the connected components of~$\ori G$. Kruskal's
algorithm~\cite{Kruskal56} (for uncompressed graphs) is given in
Algorithm~\ref{algo-kruskal}; see
Section~\ref{section:kruskal-appendix} for more background on
union-find data structures.
The key invariant that is upheld during the main loop is: 
\begin{lstlisting}[
    style=pseudocode,
    backgroundcolor=,
    float=ht,
    label={algo-kruskal},
    backgroundcolor=,
    caption={{
        Kruskal's algorithm for computing a minimum spanning tree
        $(\ori V,\ori M)$ of a
        connected undirected graph $(\ori V, \ori E)$, where the individual 
        edges are $(\ori u_1,\ori v_1)$ to $(\ori u_m,\ori v_m)$. The global variable $\ori
        M$ stores the currently computed part of the solution, while
        the global $P$ is the union-find data structure currently holding a
        partition of $\ori V$ so that the connected components of $(\ori
        V,\ori M)$ are exactly the sets in~$P$.
    }}]
algorithm $\Algo{kruskal}(\ori V, \ori E = \{\{\ori u_1,\ori v_1\},\dots, \{\ori u_m,\ori v_m\}\}, \ori w)$
   sort $\ori E$ according to weight so that f${}$or $i<j$ we have $\ori w(\{\ori u_i,\ori v_i\}) \le \ori w(\{\ori u_j,\ori v_j\})$
   $\ori M \gets \emptyset$ 
   $P \gets \Algo{initialize-union-find}(\ori V)$
   foreach $i \in \{1,\dots,m\}$ do
      call $\Algo{add-edge}(\ori u_i,\ori v_i)$
   return $\ori M$

algorithm $\Algo{add-edge}(\ori u,\ori v)$
   if $\Algo{find}(\ori u) \neq \Algo{find}(\ori v)$ then 
      $\ori M \gets \ori M \cup \{\{\ori u,\ori v\}\}$
      $P.\Algo{unite}(\ori u,\ori v)$
\end{lstlisting}
\begin{invariant}\label{inv-kruskal}
  $(\ori V,\ori M)$ is a minimum spanning forest of $(\ori
  V,\{\ori e_1,\dots, \ori e_i\})$.
\end{invariant}
\subparagraph*{Kruskal's Algorithm on \abbrv{DAG}-Compressed Graphs.} The idea behind our version of Kruskal's algorithm for
\abbrv{DAG}-compressed graphs is easy enough: Just iterate over the
compression edges instead of the normal edges and ``somehow'' handle
the edges represented by a compression edge efficiently, see
Algorithm~\ref{algo-kruskal-dag}. 

\begin{lstlisting}[
    style=pseudocode,
    float=ht,
    label={algo-kruskal-dag},
    backgroundcolor=,
    caption={{
        Adaption of Kruskal's algorithm to \abbrv{DAG}-compressed
        graphs. Note that $P$ is a partition of~$\ori V$ (not of~$V$).
    }}]
algorithm $\Algo{kruskal-dag-compressed}(V,A,E = \{\{u_1,v_1\},\dots, \{u_m,v_m\}\},w)$
   sort $E$ according to weight so that f${}$or $i<j$ we have $w(\{u_i,v_i\}) \le  w(\{u_j, v_j\})$
   $\ori M \gets \emptyset$ 
   $P \gets \Algo{initialize-union-find}(\ori V)$ // $\ori V$ are the sinks of $(V,A)$
   foreach $i \in \{1,\dots,m\}$ do
      call $\Algo{add-edges-represented-by}(u_i,v_i)$ // implemented in Algorithm $\ref{algo-edges}$ below
   return $\ori M$
\end{lstlisting}

The analogue to Invariant~\ref{inv-kruskal}, for which we now need to
show that it holds during the main loop of
Algorithm~\ref{algo-kruskal-dag}, would now be \emph{$\ori 
M$ is a  minimum spanning forest of $\bigl(\ori V,(\ori C(u_1) \otimes \ori
C(v_1)) \cup \cdots \cup (\ori C(u_i) \otimes \ori C(v_i)) \bigr)$.}
However, we will show that the following slightly different invariant
holds, which will give us a bit more flexibility in our proofs:
\begin{invariant}\label{inv-kruskal-dag-compressed}
  $(\ori V,\ori M)$ is a minimum spanning forest of $\bigl(\ori
  V, \ori E')$, where $\ori E'$ is some set with $\ori E \supseteq \ori E' \supseteq (\ori C(u_1) \otimes
  \ori C(v_1)) \cup \cdots \cup (\ori C(u_i) \otimes \ori C(v_i))$. 
\end{invariant}
In other words, our invariant just states that we always store a 
minimum spanning forest of an edge set that encompasses \emph{at least} all
uncompressed edges processed during the first $i$ iterations and \emph{at most}
all uncompressed edges. Thus, at the end, we have a minimum spanning
forest of the whole graph $(\ori V, \ori E)$ and hence a minimum spanning tree.

The obvious problem with implementing $\Algo{add-edges-represented-by}$ is
that each compression edge $(u,v)$ corresponds to a whole set
$\ori C(u) \otimes \ori C(v)$ of original edges (up to $\binom{n}{2}+n$ many) and
we may not be free to choose which should be added 
to the \abbrv{MST} as some parts of $\ori C(u)$ and of $\ori C(v)$ may
already be part of larger sets $\ori U \in P$ -- or not. To
complicate things further, $\ori C(u)$ and $\ori C(v)$ may intersect
and may only partly intersect some of the $\ori U \in P$.
Indeed, even just computing $\ori C(u)$ and $\ori C(v)$ for each edge
$\{u,v\} \in E$ is too time consuming. To address these problems, we use
a simple definition:
\begin{definition}
  A vertex $v \in V$ is \emph{clean} if $\ori C(v)
  \subseteq \ori U$ for some $\ori U\in P$.
\end{definition}
In other words, the cluster of a clean vertex must be completely
contained in one of the sets of the partition~$P$. Suppose we had a
way of easily ensuring that a vertex becomes clean. Then implementing
$\Algo{add-edges-represented-by}$ is easy, see Algorithm~\ref{algo-edges}.
\begin{lstlisting}[
    style=pseudocode,
    float=htpb,
    label={algo-edges},
    backgroundcolor=,
    caption={{
        Handling all edges in $\ori C(u) \otimes \ori C(v)$. Recall
        that $\ori c(v)$ is an arbitrary vertex in $\ori C(v)$ that we
        can compute in time~$O(1)$ by
        Lemma~\ref{lemma-representation}.
    }}]
algorithm $\Algo{add-edges-represented-by}(u,v)$
   call $\Algo{make-clean}(u,\ori c(v))$
   call $\Algo{make-clean}(v,\ori c(u))$
   call $\Algo{add-edge}(\ori c(u),\ori c(v))$ 
\end{lstlisting}

Assume for the moment that the two $\Algo{make-clean}$ calls
ensure that both $u$ and $v$ are clean when $\Algo{add-edge}$ is
called. Then the following lemma shows that the call is correct:
\begin{lemma~}\label{lemma-add-edge}
  Suppose Invariant~\ref{inv-kruskal-dag-compressed} holds for
  $i-1$ and we execute
  $\Algo{add-edge}\penalty0(\ori c(u_i),\ori c(v_i))$ for clean vertices $u_i$
  and~$v_i$. Then the invariant will still hold for~$i$ afterwards.
\end{lemma~}
\begin{proof~}
  The call checks whether $P.\Algo{find}(\ori c(u_i)) =
  P.\Algo{find}(\ori c(v_i))$ holds. First suppose this is the
  case. Then $\ori c(u_i)$ and $\ori c(v_i)$ are in the 
  same set $\ori U$ of the partition; and, because of the cleanliness
  of $u_i$ and~$v_i$, so are all other vertices in $\ori C(u_i)$ and
  in $\ori C(v_i)$. In particular, no edge in $\ori C(u_i) \otimes \ori
  C(v_i)$ is between vertices that are not yet in 
  the same set~$\ori U$ of the partition and, hence, they can all be
  skipped. Second, suppose this is not the case. Then we will safely
  add $\{\ori c(u_i), \ori c(v_i)\} \in \ori C(u_i) \otimes \ori C(v_i)$
  to~$\ori M$ and call $\Algo{unite}$, which will unite $P(\ori
  c(u_i))$ and $P(\ori c(v_i))$. Again, because of the cleanliness, in
  the new partition, all of $\ori C(u_i)$ and of $\ori C(v_i)$ will
  end up in the same set as $\ori c(u_i)$ and~$\ori c(v_i)$.
\end{proof~}
Note that calling $\Algo{unite}$ will not cause any clean vertices to
loose that status. 

Of course, not all vertices are clean at the beginning: Indeed,  at
the beginning of $\Algo{kruskal-dag-compressed}$, let us
initialize $v.\mathit{clean}$ to $\mathit{true}$ only for $v \in \ori
V$ and to $\mathit{unknown}$ for all $v \in V \setminus \ori 
V$. Fortunately, there is an easy recursive way of making a vertex
clean when we are in the process of processing a compression edge
$\{u,v\} \in E$, see Algorithm~\ref{algo-make-clean} for the implementation and
Figure~\ref{example:mstComp} on page~\pageref{example:mstComp} for an example.

\begin{lstlisting}[
    style=pseudocode,
    float=htpb,
    label={algo-make-clean},
    backgroundcolor=,
    caption={{
        Recursion for ensuring that vertices are clean while
        adding a compression edge. 
    }}]
algorithm $\Algo{make-clean}(v,\ori r)$
   // Preconditions: $\ori r \in \ori V$ and $\ori C(v) \otimes \{\ori r\} \subseteq \ori E$
   if $v.\mathit{clean} \neq \mathit{true}$ then
      foreach $w\in V$ with $(v,w) \in A$ do
         call $\Algo{make-clean}(w,\ori r)$ $\label{line-mc}$
         call $\Algo{add-edge}(\ori c(w),\ori r)$
      $v.\mathit{clean} \gets \mathit{true}$
\end{lstlisting}

\begin{lemma~}\label{lemma-algo-2}
  Suppose Invariant~\ref{inv-kruskal-dag-compressed} holds for 
  $i-1$ and we execute $\Algo{make-clean}\penalty0(v_i,\ori c(u_i))$. Then 
  $v_i$ will be clean afterwards and the invariant will still hold for
  $i-1$. 
\end{lemma~}
\begin{proof~}
  Let $\ori E'$ be the set of already-spanned edges from the invariant.
  The proof is by structural induction.  We only need to show something
  when $v$ is not yet marked as clean. Consider each child $w \in V$
  of~$v$, meaning $(v,w) \in A$. If $w$ is not yet clean, we call
  $\Algo{make-clean}(w,\ori r)$ in line~\ref{line-mc} and, by the
  induction hypothesis, this will ensure that $w$ is clean (and note
  that the precondition is still satisfied in the recursive call). Consider
  the call $\Algo{add-edge}(\ori c(w),\ori r)$: Since both $w$ and
  $\ori r$ are now clean ($\ori r$ is automatically clean as a sink), we already argued in
  Lemma~\ref{lemma-add-edge} that we will correctly add an edge (if
  necessary) to~$\ori M$ so that $\ori M$ is a  spanning
  forest of $\ori E' \cup (\ori C(w) \otimes \{\ori r\})$.

  The crucial observation is that at the end of the loop, all children
  $w \in V$ of $v$ are clean \emph{and} $\ori M$ will be a 
  spanning forest of $\ori E'' := \ori E' \cup \bigcup_{(v,w) \in A}
  (\ori C(w) \otimes \{\ori r\})$. However, since $\ori r$ is connected to all
  vertices in all~$\ori C(w)$ in $\ori E''$, all $\ori C(w)$ lie in
  the same connected component of $\ori E''$. By definition, in the
   spanning forest $\ori M$, they must also lie in the same
  connected component. Since the connected components of $\ori M$ are
  exactly the sets in~$P$, we conclude that there is a single $\ori U \in
  P$ with $\bigcup_{(v,w) \in A} \ori C(w) \subseteq \ori U$. Since $\ori
  C(v) = \bigcup_{(v,w) \in A} \ori C(w)$, we conclude that $\ori
  C(v)$ lies completely in some $\ori U \in P$ and it is, thus, correct to
  declare $v$ as clean in the last line.

  Furthermore, in each iteration, the spanning forest is a minimum spanning
  forest, since the edges are sorted ascending according to their weight.
\end{proof~}

All told, we get Theorem~\ref{theorem:kruskal} from the introduction:

\begin{claim*}[of Theorem~\ref{theorem:kruskal}]
  \textoftheoremkruskal
\end{claim*}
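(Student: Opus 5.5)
We prove Theorem~\ref{theorem:kruskal} by analyzing Algorithm~\ref{algo-kruskal-dag} together with Algorithms~\ref{algo-edges} and~\ref{algo-make-clean}, in two parts: correctness via Invariant~\ref{inv-kruskal-dag-compressed} and a runtime bound obtained by amortizing the $\Algo{make-clean}$ recursion.

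\textbf{Correctness.} We argue by induction on~$i$ that Invariant~\ref{inv-kruskal-dag-compressed} holds after iteration~$i$. For $i=0$ it holds trivially with $\ori M=\emptyset$ and $\ori E'=\emptyset$. For the step, iteration~$i$ runs $\Algo{make-clean}(u_i,\ori c(v_i))$ and $\Algo{make-clean}(v_i,\ori c(u_i))$. We first check the precondition: $\ori C(u_i)\otimes\{\ori c(v_i)\}\subseteq \ori C(u_i)\otimes \ori C(v_i)\subseteq\ori E$, and this is inherited by children since $\ori C(w)\subseteq\ori C(u_i)$. By Lemma~\ref{lemma-algo-2}, both calls preserve the invariant for $i-1$, with a possibly larger $\ori E'\subseteq\ori E$, and leave $u_i$ and $v_i$ clean. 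The second call cannot destroy the cleanliness of $u_i$, since $\Algo{unite}$ only merges sets. Lemma~\ref{lemma-add-edge} then establishes the invariant for~$i$.

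The delicate point is \emph{minimality}: the edges added inside $\Algo{make-clean}$ have weight at least $w(\{u_i,v_i\})$, not exactly the current weight. We handle this with the cut property. Every edge in $\ori C(u_i)\otimes\ori C(v_i)$ has original weight at most $w(\{u_i,v_i\})$, while every compression edge not yet processed has weight at least $w(\{u_i,v_i\})$. We therefore set the actual weight of each edge added to $\ori E'$ during iteration~$i$ to its $\ori w$-value, and observe that any edge joining two different components of $P$ and represented by an earlier compression edge would already have been merged. Hence each edge Kruskal adds is a lightest edge across its cut in $\ori G$. I expect this weight bookkeeping to be the main obstacle. Concretely, one must show that an original edge that is represented both by an earlier compression edge and by the current one is never wrongly needed. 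This follows because the earlier compression edge already processed it, and $\ori E'$ includes it. After $i=m$ we have $\ori E'=\ori E$, so $\ori M$ is a minimum spanning forest of the connected graph~$\ori G$, i.e.\ an \abbrv{MST}.

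\textbf{Runtime.} Computing $\ori c$ takes $O(|A|+|V|)$ by Lemma~\ref{lemma-representation}. Since every non-sink vertex has an outgoing arc and every vertex relevant here lies on an arc or a compression edge, $|V|=O(|A|+|E|)$, possibly after discarding isolated cluster vertices. Sorting costs $t_{\mathrm{sort}}(|E|)$. For the recursion, each vertex $v$ enters the body of the \textbf{if} statement at most once, because it is marked clean afterwards and never unmarked. So each arc $(v,w)\in A$ causes at most one recursive call and one $\Algo{add-edge}$ call overall; every other call returns in $O(1)$. Each compression edge additionally causes $O(1)$ calls. In total there are $O(|A|+|E|)$ union-find operations on a structure with $|\ori V|$ elements, costing $O((|A|+|E|)\,\alpha(|\ori V|))$ with union by rank and path compression. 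Adding up these terms gives the claimed bound.
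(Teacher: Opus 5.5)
Your proposal is correct and follows the paper's proof. Correctness comes from Lemmas~\ref{lemma-add-edge} and~\ref{lemma-algo-2} maintaining Invariant~\ref{inv-kruskal-dag-compressed}, and the runtime from each arc in~$A$ and each compression edge in~$E$ triggering only $O(1)$ calls of $\Algo{add-edge}$; your cut-property argument makes explicit the minimality step that the paper settles with a one-line appeal to the sorted order. One wording slip: an edge added in iteration~$i$ lies in $\ori C(u_i) \otimes \ori C(v_i)$, so its original weight is \emph{at most} $w(\{u_i,v_i\})$, and because it crosses components of~$P$ it is in fact exactly $w(\{u_i,v_i\})$, not ``at least'' and ``not exactly'' as you wrote.
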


\begin{proof}
  We run Algorithm~\ref{algo-kruskal-dag}, whose correctness follows
  from Lemmas \ref{lemma-add-edge} and~\ref{lemma-algo-2}. It remains
  to argue that the runtime is correct: First, by
  Lemma~\ref{lemma-representation}, the precomputation of $\ori c$
  takes time $O(|A| + |E| + |\ori V|)$, which is $O(|A| + |E|)$ as the
  graph is connected. Second, observe that during the whole run of the
  algorithm, in any call of Algorithm~\ref{algo-make-clean}, whether
  directly or through the recursion, no edge in $A$
  is processed more than once, causing one call of 
  $\Algo{add-edge}$. Similarly, each edge $\{u_i,v_i\} \in E$ is also
  processed only once and causes one call of $\Algo{add-edge}$. Since
  $\Algo{add-edge}$ takes amortized time $\alpha(|\ori V|)$, we get
  the claimed runtime.
\end{proof}

\section{Lower Bounds on the Size of Tree Compressions}
\label{section:dag-vs-tree}
\tcsautomoveaddto{main}{\subsection{Proofs for Section~\ref{section:dag-vs-tree}}}

Tree compressions are \abbrv{DAG} compressions $(V,A,E)$ where $(V,A)$
is a tree. Besides being conceptually simpler, they also allow more
efficient encodings ($n$-vertex trees can easily be encoded using
$O(n)$ bits), leading to the question of whether \abbrv{DAG}
compressions offer any advantage over tree compressions. Earlier
work~\cite{BannachMT24} conjectured that this might be the case (and
the intuition strongly suggests it), but no proof was found. We fix
this now by proving Theorem~\ref{theorem:rook} in the following, which
states: \emph{\textoftheoremrook}
Recall that the $n$-vertex rook graph $\ori R_{g \times g}$ has a grid
of size $g \times g$ with $g = \sqrt{n}$ as its vertices and two
vertices are connected iff they are in the same row or the same
column. It is easy to construct a \abbrv{DAG} compression $(V,A,E)$ of
$\ori R_{g \times g}$ of size $|A| + |E| = 2g^2 + 2g$, namely by
introducing a row cluster vertex $r_i$ for $i\in\{1,\dots,g\}$ and a
column cluster vertex $c_i$ and to connect in $A$ each $r_i$ to all
vertices of the grid in row~$i$, to connect each $c_i$ to all vertices
in column~$c_i$, and to put compression loops $(r_i,r_i)$ and
$(c_i,c_i)$ into~$E$ for all $i \in \{1,\dots,g\}$ to represent the
row and column cliques. All told, the rook graph
admits a \abbrv{DAG} compression of size~$O(n)$, while
Theorem~\ref{theorem:rook} states that any tree compression has size $|A| + |E| \in
\Omega(n^{3/2})$.

For the proof, we need a lemma:
\begin{lemma~}\label{lemma-threes}
  For $\ori R_{g\times g} = (\ori V, \ori E)$ let $\ori X, \ori X'
  \subseteq \ori V$ be sets with $|\ori X| \ge 3$ and $|\ori X'| \ge 3$ and
  $\ori X \times \ori X' \subseteq \ori E$. Then all vertices in $\ori
  X \cup \ori X'$ lie on the same row or the same column. 
\end{lemma~}
\begin{proof~}
  Assume the conclusion does not hold. Now assume that, at
  least, the vertices in~$\ori X$ lie on the same 
  row $\ori r$. Then there must be an $(\ori r',\ori c') \in \ori
  X'$ with $\ori r \neq \ori r'$. But, then, as all vertices in
  $\ori X$ must lie in different columns (they lie in the same row),
  $(\ori r',\ori c')$ would be connected to at least two vertices
  that lie both in a different row (namely $\ori r$) and different
  columns. By a similar argument, the vertices in $\ori X$ also cannot
  all lie on the same column.

  When three vertices neither all lie on a row nor 
  all on a column, two of them must lie both on different rows and
  different columns, that is, there must be $(\ori r_1,\ori c_1), (\ori
  r_2,\ori c_2 )\in \ori X$ with $\ori r_1 \neq \ori r_2$ and $\ori
  c_1 \neq \ori c_2$.
  Now consider any vertex $(\ori r',\ori c') \in\ori X'$. As
  it is connected to  all vertices in~$\ori X$, it must share a row or a
  column with $(\ori r_1,\ori c_1)$ and also with $(\ori r_2,\ori
  c_2)$. This is only possible either for $\ori r'=\ori r_1$ and $\ori
  c'=\ori c_2$ or for $\ori r'=\ori r_2$ and $\ori c'=\ori c_1$ (other
  cases are ruled out by $\ori r_1 \neq \ori r_2$ and $\ori
  c_1 \neq \ori c_2$). But this means that there are only \emph{two}
  possibilities for $(\ori r',\ori c')$, contradicting $|\ori X'| > 2$. 
\end{proof~}

\begin{proof}[Proof of Theorem~\ref{theorem:rook}]
  Let $T = (V, A, E)$ be a tree compression of the rook graph $\ori
  R_{g\times g}$ such that $|E|$ is minimal. Let $\ori V$ denote the
  vertex set of $\ori R_{g  \times g}$, that is, the $g \times g$
  grid. We may assume that the 
  cluster tree $T = (V, A)$ is a binary tree, meaning that each
  cluster vertex $v \in V$ either has exactly two children or is a
  leaf (and then an element of $\ori V$), since we can easily adapt
  $A$ to satisfy this condition without changing~$E$.

  Let us call a cluster vertex $v \in V$ \emph{big} if $|\ori C(v)|
  \ge 3$. A big vertex is \emph{horizontal} if all vertices in $\ori
  C(v)$ lie in the same row, and \emph{vertical} if they all lie in
  the same column. A big vertex that is neither horizontal nor
  vertical is a \emph{cross} if there is pair $(\ori r_0,\ori c_0)$,
  called the \emph{crosshair of~$v$}, such that all $(\ori r,\ori c)
  \in \ori C(v)$ have $\ori r=\ori r_0$ or $\ori c=\ori c_0$. Note
  that when $v$ is a parent of $u$ in the tree and both $v$ and $u$
  are crosses, then $v$ and $u$ have the same
  crosshair. Figure~\ref{RookGraph} depicts an example.
\begin{figure}[th]
  \centering
  \begin{tikzpicture}
    
    \tikzset{every node/.style={node},
      treecompressionnode/.style={clusterVertex},
      treecompression/.style={
        arc
    }}

    \foreach \x in {0,1,2,3, 4, 5}  {
      \draw (0, \x) -- (5, \x);
      \draw (\x, 0) -- (\x, 5);
    }
    \foreach \x in {0,1,2,3, 4, 5}
    \foreach \y in {0,1,2,3, 4, 5} {
      \node[node,thick] (t\x\y) at (\x,\y) {};
    }


    \node[treecompressionnode] (t1-1) at (-1,1.5) {};
    \node[treecompressionnode] (t1-2) at (-1.5,2.5) {$v_1$};
    \draw[arc, treecompression] (t1-2) -- (t1-1);
    \draw[arc, treecompression] (t1-2) -- (t03);
    \draw[arc, treecompression] (t1-1) -- (t02);
    \draw[arc, treecompression] (t1-1) -- (t01);

    \node[treecompressionnode] (t2-1) at (-2.75, 2.5) {$v_2$};
    \draw[arc, treecompression] (t2-1) -- (t04);
    \draw[arc, treecompression] (t2-1) -- (t00);


    \node[treecompressionnode] (t3-1) at (1.5,5.5) {};
    \node[treecompressionnode] (t3-2) at (2.1,6.3) {$h$};
    \draw[arc, treecompression] (t3-2) -- (t3-1);
    \draw[arc, treecompression] (t3-2) -- (t34);
    \draw[arc, treecompression] (t3-1.-85) -- (t24);
    \draw[arc, treecompression] (t3-1.-95) -- (t14);

    \node[treecompressionnode] (t4-1) at (4.5, 5.5) {$s$};
    \draw[arc, treecompression] (t4-1.-95) -- (t44);
    \draw[arc, treecompression] (t4-1.-85) -- (t54);


    \node[treecompressionnode] (t5-1) at (6, 1.2) {};
    \node[treecompressionnode] (t5-2) at (7, 0.5) {$c$};
    
    \draw[arc, treecompression] (t5-1) -- (t42);
    \draw[arc, treecompression] (t5-1) -- (t51);
    \draw[arc, treecompression] (t5-2) -- (t5-1);
    \draw[arc, treecompression] (t5-2) -- (t50);

    \draw[compressedUn] (t1-2) -- (t2-1);
    \draw[compressedUn] (t3-2) -- (t4-1);

    \draw[compressedUn] (t5-2) to[bend right=50] (t52);

  \end{tikzpicture} 

  \caption{Example for part of a tree compression of the rook graph
    with $g = 6$. For legibility, the row cliques and column cliques
    are only indicated by the edges in~$\ori E$ between direct neighbors in
    the rook graph. The vertices $v_1$ and $v_2$ are vertical, the
    vertex $h$ is horizontal, but $s$ is not horizontal as it is not
    big. The cross vertex $c$ is connected by a 
    compression edge to its crosshair.
  }
  \label{RookGraph}

\end{figure}
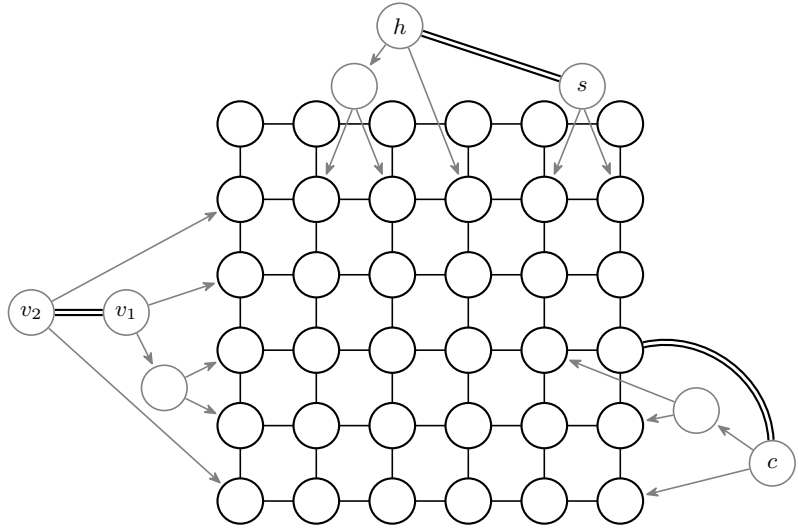

  We make some easy observations:
  \begin{enumerate}
  \item If $(u,v) \in E$ is a
    compression edge connecting two big vertices, then all vertices in
    $\ori C(u) \cup \ori C(v)$ must lie on a single row or in a single
    column (this is a direct consequence of
    Lemma~\ref{lemma-threes}).
  \item If $(u,v) \in E$ connects a
    cross~$u$ and a vertex~$v$, then $v$ must be the crosshair of~$u$  
    (since $u$ is neither horizontal nor vertical, it contains vertices
    from both different rows and different columns and can thus only be 
    connected to the crosshair). Furthermore, we may assume that the
    parent~$p$ of~$u$ is not a cross (since we would get only a
    potentially smaller tree compression by replacing $(u,v)$ by $(p,v)$
    in~$E$; recall that $u$ and~$p$ must have the same crosshair).
  \item
    There can be no $(u,v) \in E$ where $u$ is big, but not even a
    cross. 
  \end{enumerate}
  Let us now count the number of edges in $\ori E$ that can be covered
  by the compression edges in~$E$. We first consider edges $(u,v) \in E$ connecting
  a cross $u$ and its crosshair~$v$: Such a compression edge
  can represent at most $|\ori C(u)|$ many edges in $\ori
  E$. Crucially, as we argued that $u$ will be a maximal cross (and
  there cannot be crosses further up towards the root: once we loose
  the status of being a cross, we cannot regain it), the sets $\ori
  C(u)$ are pairwise disjoint for maximal crosses~$u$. This means that
  the sum of their sizes is at most $|\ori V| = g^2$. All told,
  compression edges $(u,v)$ involving crosses can represent at most
  $g^2$ edges in~$\ori E$.

  Next, for a given row $\ori r \in \{1,\dots, g\}$, consider all
  vertices $(\ori r,\ori c)$ for $\ori c \in \{1,\dots,g\}$ in the
  row and consider the first big vertex $v$ that is an ancestor of
  $(\ori r,\ori c)$. Intuitively, if $v$ is a horizontal vertex, this
  is helpful for compressing the edges in the row, so let us count the
  number of vertices $v$ for which this is \emph{not} the case and
  call this number~$\bar h_{\ori r}$.
  In the same way, define $\bar v_{\ori c}$ as the number of vertices
  in column $\ori c$ whose first big ancestor is \emph{not} a vertical
  vertex. 
  Since the first big ancestor of a vertex $(\ori r,\ori c)$ cannot
  be both horizontal and vertical at the same time, each vertex $\ori
  V$ counts in some $\bar h_{\ori r}$ or $\bar v_{\ori c}$. In
  particular, $\sum_{\ori r \in \{1,\dots, g\}} \bar h_{\ori r} +
  \sum_{\ori c \in \{1,\dots, g\}} \bar v_{\ori c} \ge g^2$. This
  implies that at least one of the sums is $g^2 /2$, say the first
  (the other case is symmetric).

  We now have $\sum_{\ori r \in \{1,\dots, g\}} \bar h_{\ori r} \ge g^2/2$. This
  means that a sum of $g$ numbers, each at most~$g$, is at least $g^2/2$. This
  is only possible when at least half of the numbers are at least $g/2$, so for
  at least $g/2$ rows we have $\bar h_{\ori r} \ge g/2$. Now consider such a
  row: At least $g/2$ vertices $\ori v$ do \emph{not} have a horizontal vertex
  as their first big ancestor. Consider the at least $g^2/4$ edges in $\ori E$
  between them and how they can be represented by compression edges: They
  can\emph{not} be represented by compression edges between two big vertices
  since these would have to be horizontal vertices, but none of the ancestors of
  the~$\ori v$ are horizontal. They also cannot be represented by a compression
  vertex involving a big edge that is not a cross. We are left with compression
  edges between non-big vertices in $V$ and between a cross and its crosshair.

  In total, we get that we need to cover \emph{at least $(g^2/4)\cdot
  (g/2)= g^3/8$ edges in $\ori E$ using compression edges between
  non-big cluster vertices and between crosses and their crosshairs.}
  We argued already the compression edges between crosses and their
  crosshairs can contribute at most $g^2$ edges. Now, since a
  compression edge between two small cluster vertices can contribute
  at most $4$~edges, \emph{there must be at least $(g^3/8)/4 - g^2 =
  g^3/32 - g^2$ compression edges in~$E$.}
\end{proof}

By the above results, a \abbrv{DAG} compression of an $n$-vertex graph
can have size $O(n)$ while the best tree compression has size
$\Theta(n^{3/2})$, meaning the compression is better by a factor
of~$\sqrt n$. Can we do better? Since a graph can have up to $O(n^2)$ 
edges, the theoretical maximum is a factor of~$n$. We believe that we
can get arbitrarily close to that factor by adapting the argument for
the rook graph to rook graphs in higher dimensions: For instance, let
$\ori R_{g\times g \times g}$ be the graph whose vertices form a three
dimensional grid and where there is an edge between $(\ori x,\ori
y,\ori z)$ and $(\ori x',\ori y',\ori z')$ iff $\ori x = \ori x'$ or
$\ori y = \ori y'$ or $\ori z = \ori z'$. Again, it is easy to find a
\abbrv{DAG} compression of size $O(g^3) = O(n)$ of this graph, but we
believe that our argument for $\ori R_{g\times g}$ can be adapted to
show that any tree compression of $\ori R_{g\times g \times g}$ has size $\Omega(g^5) = n^{5/3}$. However,
the argument does not seem to be straightforward, so we formulate a
conjecture: 
\begin{conjecture}
  Any tree compression of the $d$-dimensional rook graph $\ori
  R_{g^d}$ has size $\Omega(n^{(2d-1)/d})$.
\end{conjecture}

\section{Hardness of Computing and Updating DAG Compressions}
\label{section:hardness}
\tcsautomoveaddto{main}{\subsection{Proofs for Section~\ref{section:hardness}}}

Both previous work and the earlier results show that \abbrv{DAG}
compressions offer a way of efficiently running algorithms on large
dense graphs. Naturally, we first need to obtain a \abbrv{DAG}
compression in the first place. For some applications, where graphs
are generated algorithmically, this may be easy to do, but in general
we get a graph $\ori G$ as input and need to compute an as-small-as-possible
\abbrv{DAG} compression $(V,A,E)$ of~$\ori G$. Unfortunately, 
in the following we prove Theorem~\ref{theorem:main}, which states \emph{\textoftheoremmain}

The proof is by a reduction from the $\Lang{set-cover}$ problem. Given
a set $T = \{S_1,\dots,S_n\}$, recall that a \emph{cover} of a set~$V$
is a subset $X \subseteq T$ such that $V \subseteq \bigcup
X$.
\begin{problem}[$\Lang{set-cover}$]\label{problem:set-cover}\hfil
  \begin{description}
    \item[Input:] A universe $U$, a collection $T =
      \{S_1,\dots,S_n\}$ of subsets of~$U$, $k\in \mathbb{N}$.
    \item[Question:]
      Is there a cover $X \subseteq T$ of~$U$ of size $|X| \le k$? 
  \end{description}
\end{problem}

In the following, we will first prove several general lemmas
concerning properties of optimal \abbrv{DAG} compressions; we believe
these lemmas to be interesting in their own right as they tell us more
about the power and limitations of \abbrv{DAG} compressions. The proof
of Theorem~\ref{theorem:main} is presented afterwards. At the end of the
section we present a variant of the theorem in a dynamic setting, which
shows that it is not only hard to compute an optimal \abbrv{DAG}
compression ``from scratch'', but even updating an already existing
one even without changing the cluster \abbrv{DAG} is hard.

\subparagraph*{Properties of Optimal \abbrv{DAG} Compressions.}

The lemmas proved in the following all state that for all graphs~$\ori
G$, possibly satisfying some restrictions, there \emph{exists} an
optimal \abbrv{DAG} compression with certain properties. The proofs
always start by considering an optimal \abbrv{DAG} compression that
violates the claimed properties and then argues that we can ``fix''
the violation by changing the compression slightly to arrive at a new
\abbrv{DAG} compression that is still optimal. By possibly repeating
the process, we get an optimal \abbrv{DAG} compression that satisfies
the property.
For the first property recall that a \emph{twin} in a graph $G = (V,E)$ is a pair $t_1$,~$t_2$
of vertices such that their in- and out-neighborhoods are
identical; see the left part of Figure~\ref{fig:recons} for an example.
\begin{lemma~}[Twins Will Be Twins]\label{lemma:TwinsWillBeTwins}
  Every graph $\ori G$ has an optimal \abbrv{DAG} compression
  $(V,A,E)$ such that all twins in $\ori G$ are also twins in
  $(V,A)$ and in $(V,E)$.
\end{lemma~}
\begin{proof~}
  Suppose there are twins $\{\ori{t}_1,\ori{t}_2\}$ in~$\ori{G}$
  that are not twins in $(V,A)$ or in $(V,E)$. Without loss of
  generality, assume that $\ori{t}_1$ has  the smaller total degree
  (the sum of the sizes of its 
  in-neighborhood and out-neighborhood in $(V,A)$ and in $(V,E)$) of
  the two vertices. Then remove all arcs and compression edges
  incident to $\ori{t}_2$ and add the same arcs and compression edges incident
  to $\ori{t}_1$ also to $\ori{t}_2$. Since we picked the twin with the smallest
  total degree, we deleted at least as many arcs and compression edges as we
  added. Therefore, the resulting \abbrv{DAG} compression remains optimal, and it
  is easy to see that it still represents the same graph~$\ori{G}$. We can
  iterate this process until all twins in~$\ori{G}$ are also twins
  in $(V,A)$ and in~$(V,E)$.
\end{proof~}

\begin{figure}[htpb]
  \centering
  
  \newcommand{\twinsRAW}{

    \node[node] (1) {$\ori{1}$};
    \node[node, right of=1] (2) {$\ori{2}$};
    \node[node, right of=2] (3) {$\ori{3}$};

    \node[node, above=3cm of 1] (1tS1) {$\ori{t}_{1}$};
    \node[node, right=0.2cm of 1tS1] (2tS1) {$\ori{t}_{2}$};

    \node[clusterVertex, above left=0.5cm and 0.2cm of 3] (c4) {};

    \node[clusterVertex, below =1.3cm of 1tS1] (c5) {};
    
    \node[node, grayed,  right= 0.5cm of 2tS1] (1tS2) {};
    \node[node, grayed,  right=0.2cm of 1tS2] (2tS2)  {};

    \node[clusterVertex, grayed,  below right =0.6cm and -0.05cm of 1tS2] (tc2) {};

    \node[clusterVertex,  below right =0.6cm and 0.4cm of tc2 ] (tc4) {};

    \draw (1tS2) edge[compressed, grayed] (tc2);
    \draw (2tS2) edge[compressed, grayed] (tc2);
    \draw (tc4) edge[compressed, grayed] (c4);

    \draw[arc] (c4) -- (2);
    \draw[arc] (c4) -- (3);

    
    \draw (c5) edge[cluster] (1);
    
    \ifStageOne
    \draw (1tS1) edge[compressed] (c4);
    \draw (1tS1) edge[compressed]  (c5);
    \draw (2tS1) edge[compressed] (2)
                 edge[compressed, bend left=20] (3);
    \draw (2tS1) edge[compressed]  (c5);
    \fi
    
    \ifStageTwo
    \draw (2tS1) edge[compressed] (c4);
    \draw (1tS1) edge[compressed] (c4);
    \draw (1tS1) edge[compressed]  (c5);
    \draw (2tS1) edge[compressed]  (c5);
    \fi

    \ifStageThree
    \node[clusterVertex, above right =0.4 cm and 0cm of c5 ] (tc) {$c$};
    \draw (1tS1)   edge[compressed] (tc);
    \draw (2tS1)   edge[compressed] (tc);
    \draw (tc) edge[cluster] (c5);
    \draw (tc) edge[cluster] (c4);
    \fi

    
    \path[path picture={
        \fill[white, path fading=fade right] (path picture bounding box.south west) rectangle (path picture bounding box.north east);
    }] (current bounding box.south -| 2.4,0) rectangle (current bounding box.north -| 3.6,0);
  }

  \newif\ifStageOne
  \StageOnetrue 
  \StageOnefalse

  \newif\ifStageTwo
  \StageTwotrue 
  \StageTwofalse

  \newif\ifStageThree
  \StageThreetrue 

  \begin{tikzpicture}[node distance=1.3cm]

    \StageOnetrue
    \StageTwofalse
    \StageThreefalse
    \begin{scope}
      \twinsRAW
    \end{scope}
    \draw[-To] (3.2,2) -- (4,2) node[midway, above=1mm] {Lemma  \ref{lemma:TwinsWillBeTwins}};

    \StageOnefalse
    \StageTwotrue
    \StageThreefalse
    \begin{scope}[xshift=4.9cm]
      \twinsRAW
    \end{scope}
    \draw[-To] (8.1,2) -- (8.9,2) node[midway, above=1mm] {Lemma \ref{lemma:oneedge}};

    \StageOnefalse
    \StageTwofalse
    \StageThreetrue
    \begin{scope}[xshift=9.8cm]
      \twinsRAW
    \end{scope}
  \end{tikzpicture}
  \caption{Transformation steps that
    maintain an optimal \abbrv{DAG} compression $(V,A,E)$. In the
    first step we make the twins $\ori{t}_1$ and $\ori{t}_2$ in $(\ori
    V, \ori E)$ also twins $(V,A)$ and $(V,E)$ by mirroring the twin
    with the smaller total degree to the twin with the larger total
    degree. In the second step, we add a new cluster vertex~$c$ and
    reconnect edges to lower the number of compression edges incident
    to the twins.}
  \label{fig:recons}

\end{figure}
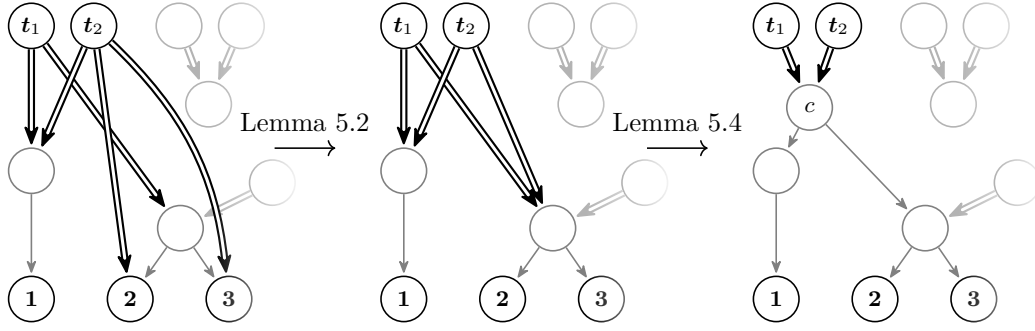

The second property concerns the cluster \abbrv{DAG} of compressions of
``directed bipartite graphs''. Such graphs have a vertex set that can
be partitioned into two \emph{shores $\ori V_1$ and $\ori V_2$} and
the edges are only from $\ori V_1$ to $\ori V_2$, that is, $\ori E
\subseteq \ori V_1 \times \ori V_2$. See Figure~\ref{example2:shores}
for an example.
\begin{lemma~}[Clusters in One Shore Only]\label{lemma:oneShore}
  Every directed bipartite graph $\ori G = (\ori V_1 \cup \ori
  V_2,\ori E)$ has an optimal \abbrv{DAG} compression $(V,A,E)$ such
  that for all $v \in V \setminus \ori{V}_1$ we have $\ori C(v)
  \subseteq  \ori V_2$.
\end{lemma~}
\begin{proof~}
  Let $U_1 = \{ u \in V \mid \ori C(u) \cap \ori{V}_1 \neq
  \emptyset\}$ be the cluster vertices that contain a vertex
  from~$\ori V_1$ and define $U_2$ analogously for $\ori V_2$. Note
  that $U_1$ and $U_2$ must be disjoint in any optimal \abbrv{DAG}
  compression since no compression edge can use a cluster vertex~$v$
  that is present in both, meaning that we could safely remove $v$ and
  reduce the size of the compression.

  The claim now simply states that there is an optimal \abbrv{DAG}
  compression with $U_1 = \ori V_1$. Suppose 
  that this not yet the case. Consider any $u_1 \in U_1 \setminus
  \ori V_1$ that is a source in the cluster \abbrv{DAG} restricted
  to~$U_1$ (such as $u_1 = a$ or $u_1 = d$ in
  Figure~\ref{example2:shores}). Then all compression edges in~$E$
  involving $u_1$ are of the form $(u_1,v) \in E$ with $v \in U_2$ and all cluster arcs
  in~$A$ involving $u_1$ are of the form $(u_1,v)$ with $v \in
  U_1$. Now we \emph{switch this,} meaning that we form $E'$ from~$E$
  and $A'$ from~$A$ by removing all compression edges and all
  cluster arcs involving $u_1$, and instead add the following: For
  each former $(u_1,v) \in E$ we add $(u_1,v)$ to~$A'$, and for each
  former $(u_1,v) \in A$ we add $(v,u_1)$ to~$E'$ (note the
  directions).

  We claim that the transformation yields a new optimal \abbrv{DAG}
  compression of the original graph. Clearly, the size does not
  change. To see that the same edges~$\ori E$ are still represented, consider
  the former compression edges $(u_1,v) \in E$. Jointly, they
  represented the edges (recall that  $\ori C(u_1) =
  \bigcup_{(u_1,u) \in A} \ori C(u)$):  
  \begin{align*}
    \bigcup_{(u_1,v) \in E} \bigl(\ori C(u_1) \times \ori
    C(v)\bigr) =
    \left(\bigcup_{(u_1,u) \in A} \ori C(u)\right) \times
    \left(\bigcup_{(u_1,v) \in E} \ori C(v)\right).
  \end{align*}
  In the new compression, the new compression edges jointly
  represent (where $\ori C'(\cdot)$ is the cluster function in the new 
  graph)
  \begin{align*}
    \bigcup_{(v,u_1) \in E'} \bigl(\ori C'(v) \times \ori
    C'(u_1)\bigr) =
    \left(\bigcup_{(v,u_1) \in E'} \ori C'(v)\right) \times
    \left(\bigcup_{(u_1,u) \in A'} \ori C'(u)\right).
  \end{align*}
  Comparing the two last values and noting that $\ori C(x) = \ori
  C'(x)$ for all $x \neq u_1$, we see that they are the same.

  By repeating the process as often as needed, we get a compression
  with $U_1 = \ori V_1$.

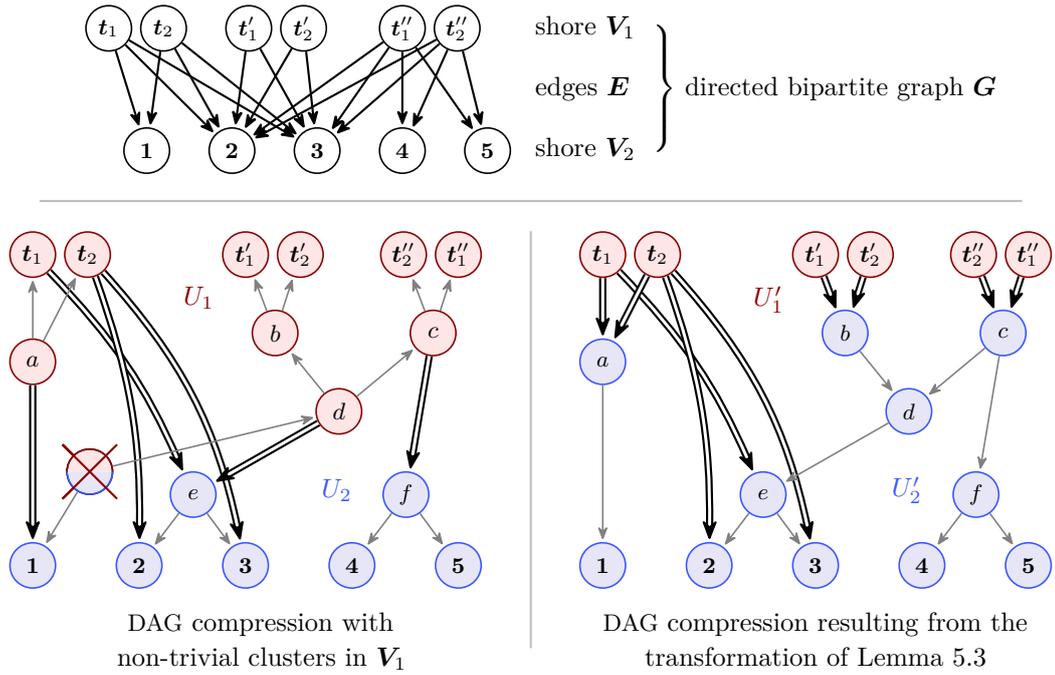
\begin{figure}[htpb]
  \centering

  \newcommand{\DAGCompressionTikz}{
    
    \node[elementClusterVertex] (1) {$\ori{1}$};
    \node[elementClusterVertex, right of=1] (2) {$\ori{2}$};
    \node[elementClusterVertex, right of=2] (3) {$\ori{3}$};
    \node[elementClusterVertex, right of=3] (4) {$\ori{4}$};
    \node[elementClusterVertex, right of=4] (5) {$\ori{5}$};
    
    \node[twinCluster, above=3.5cm of 1] (1tS1) {$\ori{t}_{1}$};
    \node[twinCluster, right=0.1cm of 1tS1] (2tS1) {$\ori{t}_{2}$};
    
    \node[twinCluster,  above=3.5cm of 3] (1tS2) {$\ori{t}_{1}'$};
    \node[twinCluster,  right=0.1cm of 1tS2] (2tS2)  {$\ori{t}_{2}'$};
    
    \node[twinCluster, above=3.5cm of 5] (2tS3)  {$\ori{t}_{1}''$};
    \node[twinCluster,  left=0.1cm of 2tS3] (1tS3) {$\ori{t}_{2}''$};

    \node[elementClusterVertex, above left=0.5cm and 0.25cm of 3] (c4) {$e$};
    \node[elementClusterVertex, above left=0.5cm and 0.25cm of 5] (c5) {$f$};
    
    \ifStageOne
    \node[twinCluster, below =0.8cm of 1tS1] (tc1) {$a$};
    \node[twinCluster, below right =0.6cm and -0.05cm of 1tS2] (tc2) {$b$};
    \node[twinCluster, below right =0.6cm and -0.05cm of 1tS3] (tc3) {$c$};
    \node[twinCluster, below right =0.6cm and 0.4cm of tc2] (tc4) {$d$};
    
    \node[text=darkred, below left=1mm of 1tS2] {$U_1$};
    \node[text=darkblue] at (4,1) {$U_2$};
    \fi
    
    \ifStageTwo
    \node[elementClusterVertex, below =0.8cm of 1tS1] (tc1) {$a$};
    \node[elementClusterVertex, below right =0.6cm and -0.05cm of 1tS2] (tc2) {$b$};
    \node[elementClusterVertex, below right =0.6cm and -0.05cm of 1tS3] (tc3) {$c$};
    \node[elementClusterVertex,  below right =0.6cm and 0.4cm of tc2 ] (tc4) {$d$};
    
    \node[text= darkred, below left=1mm of 1tS2]  {$U_1'$};
    \node[text= darkblue] at (4,1) {$U_2'$};
    
    \fi
    
    \draw  (c4) edge[cluster] (2);
    \draw  (c4) edge[cluster] (3);
    \draw  (c5) edge[cluster] (4);
    \draw  (c5) edge[cluster] (5);
    
    \draw(1tS1) edge[compressed, bend left =10] ( c4);
    \draw(2tS1) edge[compressed,  bend left =15] (3);
    \draw(2tS1) edge[compressed,  bend left =10] (2);

    \ifStageOne

    \node[node, below left =1.4cm and 2cm of tc2, path picture={
        \begin{scope}
          \clip (path picture bounding box.center) -- (path picture bounding box.east) arc (0:180:1cm) -- cycle;
          \fill[lightred] (path picture bounding box.south west) rectangle (path picture bounding box.north east);
        \end{scope}
        \begin{scope}
          \clip (path picture bounding box.center) -- (path picture bounding box.west) arc (180:360:1cm) -- cycle;
          \fill[lightblue] (path picture bounding box.south west) rectangle (path picture bounding box.north east);
        \end{scope}
    }] (sup) {};
    \draw[darkred, semithick] (sup.center) ++(0:0.3cm) arc (0:180:0.3cm);
    \draw[darkblue, semithick] (sup.center) ++(180:0.3cm) arc (180:360:0.3cm);

    \doublecross{sup}
    
    \draw (tc1) edge[arc] (1tS1);
    \draw(tc1) edge[arc] (2tS1);
    \draw(tc2) edge[arc] (1tS2);
    \draw(tc2) edge[arc] (2tS2);
    \draw(tc3) edge[arc] (1tS3);
    \draw(tc3) edge[arc] (2tS3);

    \draw(tc4) edge[arc] (tc2);
    \draw(tc4) edge[arc] (tc3);
    
    \draw(sup) edge[arc] (tc4);
    \draw(sup) edge[arc] (1);
    
    \draw(tc1) edge[compressed](1);
    \draw(tc4) edge[compressed] (c4);
    \draw(tc3) edge[compressed] (c5);

    \fi

    \ifStageTwo
    
    \draw (1tS1) edge[compressed] (tc1);
    \draw (2tS1) edge[compressed](tc1);
    \draw(1tS2) edge[compressed](tc2);
    \draw (2tS2) edge[compressed](tc2);
    \draw (1tS3) edge[compressed](tc3);
    \draw(2tS3) edge[compressed](tc3);

    \draw (tc1) edge[cluster] (1);
    \draw (tc4) edge[cluster] (c4);
    \draw(tc3) edge[cluster] (c5);
    
    \draw (tc2) edge[cluster] (tc4);
    \draw (tc3) edge[cluster] (tc4);
    
    \fi
  }
  
  \newif\ifStageOne
  \StageOnetrue 
  \StageOnefalse
  
  \newif\ifStageTwo
  \StageTwotrue 
  \StageTwofalse

  \begin{tikzpicture}[node distance=1.4cm,,
      elementClusterVertex/.style={node,draw= darkblue,fill=lightblue},
      twinCluster/.style={ node,draw= darkred, fill=lightred, font=\small}]
    \centering

    \begin{scope}[xshift=1.5cm, yshift=5.5cm]
      \node[vertex] (1) {$\ori{1}$};
      \node[vertex, right=5mm of 1] (2) {$\ori{2}$};
      \node[vertex, right=5mm of 2] (3) {$\ori{3}$};
      \node[vertex, right=5mm of 3] (4) {$\ori{4}$};
      \node[vertex, right=5mm of 4] (5) {$\ori{5}$};

      
      \node[vertex, above=1cm of 1,xshift=-5mm] (1tS1) {$\ori{t}_{1}$};
      \node[vertex, right=1mm of 1tS1] (2tS1) {$\ori{t}_{2}$};
      
      \node[vertex, right=5mm of 2tS1, ] (1tS2) {$\ori{t}_{1}'$};
      \node[vertex, right=1mm of 1tS2] (2tS2)  {$\ori{t}_{2}'$};
      
      \node[vertex, above=1cm of 4] (2tS3)  {$\ori{t}_{1}''$};
      \node[vertex, right=1mm of 2tS3] (1tS3) {$\ori{t}_{2}''$};
      
      \draw[dedge] (1tS1) -- (1);
      \draw[dedge] (1tS1) -- (2);
      \draw[dedge] (1tS1) -- (3);
      \draw[dedge] (2tS1) -- (1);
      \draw[dedge] (2tS1) -- (2);
      \draw[dedge] (2tS1) -- (3);

      \draw[dedge] (1tS2) -- (2);
      \draw[dedge] (1tS2) -- (3);
      \draw[dedge] (2tS2) -- (2);
      \draw[dedge] (2tS2) -- (3);

      \draw[dedge] (1tS3) -- (2);
      \draw[dedge] (1tS3) -- (3);
      \draw[dedge] (1tS3) -- (4);
      \draw[dedge] (1tS3) -- (5);
      \draw[dedge] (2tS3) -- (2);
      \draw[dedge] (2tS3) -- (3);
      \draw[dedge] (2tS3) -- (4);
      \draw[dedge] (2tS3) -- (5);
    \end{scope}
    \node[at={(5|-2tS3)},anchor=mid west,xshift=5mm] {shore $\ori{V}_1$};
    \node[above=5mm of 5,anchor=mid west,xshift=5mm] {edges $\ori E$};
    \node[above=5mm of 5,anchor=mid west,xshift=2cm,overlay] {$\left.\vrule height10mm width0pt\right\}$ directed bipartite graph $\ori{G}$};
    \node[at=(5),anchor=mid west,xshift=5mm] {shore $\ori{V}_2$};

    \StageOnetrue
    \StageTwofalse
    \begin{scope}
      \DAGCompressionTikz
    \end{scope}
    \node[anchor=north,text width=6cm,align=flush center] at (3, -0.5) {\abbrv{DAG} compression with
      non-trivial~clusters in~$\ori V_1$};
    
    \StageOnefalse
    \StageTwotrue
    \begin{scope}[xshift=7.5cm]
      \DAGCompressionTikz
    \end{scope}
    \node[anchor=north,text width=6cm,align=flush center] at (10.3, -0.5) {
      \abbrv{DAG} compression resulting from the transformation of Lemma~\ref{lemma:oneShore}};
    
    \draw[sepline] ([yshift=-2.6cm,xshift=4mm]current bounding box.north west) -- ([yshift=-2.6cm,xshift=-4mm]current bounding box.north east);
    \draw[sepline] ([yshift=-3cm]current bounding box.north) -- ([yshift=4mm]current bounding box.south);
    
  \end{tikzpicture}

  \caption{Example of the transformation from
    Lemma~\ref{lemma:oneShore} to ensure that an optimal \abbrv{DAG}
    compression of a directed graph $\ori G = (\ori V_1 \cup \ori
    V_2,\ori E)$ only has ``clusters in shore $\ori V_2$'', meaning
    that $\ori C(v) \subseteq \ori V_2$ holds for all $v \in V
    \setminus \ori V_1$. Recall, that $U_1$ and $U_2$ are defined
    as the set of cluster vertices whose clusters intersect $\ori{V}_1$ and
    $\ori{V}_2$, respectively. Note that the vertex in $U_1 \cap U_2$
    has no incident compression edge and can be removed. An example of
    the ``switch'' from Lemma~\ref{lemma:oneShore} happens for~$a$: The
    compression edge $(a,\ori 1) \in E$ is replaced by an arc $(a,\ori
    1) \in A'$, while the two arcs $(a,\ori t_1) \in A$ and $(a,\ori
    t_2) \in A$ get replaced by compression edges $(\ori t_1,a) \in E'$
    and $(\ori t_2,a) \in E'$ in the opposite direction. Other switches
    are for, first,~$d$, followed by $b$ and~$c$,
    resulting in the compression shown right.    
  }
  \label{example2:shores}

\end{figure}
\end{proof~}
Note that the \abbrv{DAG} compression from Lemma~\ref{lemma:oneShore}
has the property that \emph{all compression edges go from a vertex
in~$\ori V_1$ to a cluster vertex in $V \setminus \ori V_1$.}

The third lemma is an interesting combination of the first two lemmas,
see the right side of Figure~\ref{fig:recons} for an example:
\begin{lemma~}[Twins Have Only One Compression Edge]\label{lemma:oneedge}
  Every directed bipartite graph $\ori G = (\ori V_1 \cup \ori 
  V_2,\ori E)$ has an optimal \abbrv{DAG} compression $(V,A,E)$ such
  that all twins $\ori t_1,\ori t_2 \in \ori V_1$ have at most one
  incident compression edge.
\end{lemma~}
\begin{proof~}
  Let an optimal \abbrv{DAG} compression $(V,A,E)$ of~$\ori G$ be
  given. By Lemma~\ref{lemma:TwinsWillBeTwins} we may assume that all
  twins of~$\ori G$ are also twins in $(V,A)$ and in $(V,E)$. By
  Lemma~\ref{lemma:oneShore} we may assume that all $v \in V \setminus
  \ori V_1$ have $\ori C(v) \subseteq \ori V_2$; and note that the
  transformation of the second lemma does not destroy the ``twins will
  be twins'' property. Now suppose that there are twins $\ori t_1,\ori
  t_2 \in \ori V_1$ such that $(\ori t_1, u) \in \ori E$ and also
  $(\ori t_1, v) \in \ori E$ for some $u \neq v$. Since $\ori t_1$ and
  $\ori t_2$ are also twins in $(V,A)$, we also have $(\ori t_2,
  u),(\ori t_2, v) \in \ori E$. Remove these four compression edges,
  add a new cluster vertex $c$ to~$V$, add two cluster arcs $(c,u)$ and
  $(c,v)$ to~$A$, and add two compression edges $(\ori t_1,c)$ and
  $(\ori t_2,c)$ to~$E$. It is easy to see that we now still have an
  optimal \abbrv{DAG} compression of~$\ori G$, but the degree of
  $\ori t_1$ and $\ori t_2$ in~$E$ has been reduced by one. If we
  repeat the transformation as long as possible, we get a graph
  satisfying the claim.
\end{proof~}


\subparagraph*{The Reduction.}
We are now nearly ready to present the reduction from
$\Lang{set-cover}$ to $\Lang{min-dag-compression}$ and prove
Theorem~\ref{theorem:main}. One final definition and lemma will be needed: The standard way of
encoding a collection~$T$ of subsets of~$U$ as a graph is through
the bipartite \emph{incidence graph,} where one shore is~$U$ and the
other has a vertex $a_S$ for each $S \in T$ and there are edges from
each~$a_S$ to all elements of~$S$. For our purposes, 
it will be useful to have a ``twinned'' version of the
incidence graph, where $a_S$ has an additional twin~$b_S$:
\begin{definition}
  Let $T$ be a collection of subsets of~$U$. The \emph{twinned
  incidence graph $\ori T = (\ori V_1 \cup \ori V_2,\ori E)$ of~$T$}
  is the directed bipartite graph with $\ori V_1 =
  \{\ori a_S \mid S \in T\} \cup \{\ori b_S \mid S \in   T\}$,  $\ori V_2 = U$, and $\ori E = \{(\ori
  a_S,\ori s) \mid \ori s \in S \in T\} \cup  \{(\ori 
  b_S,\ori s) \mid \ori s \in S \in T\}$.
\end{definition}

\begin{lemma~}\label{lemma-twinned-size} Let $T$ be a collection of subsets
  of~$U$. Let $T'= T \cup \{R\}$ for a set $R \subseteq U$ that is not a subset
  of any $S \in T$. Let $k_=$ be the minimum size of a set $X_= \subseteq T$
  such that $\bigcup X_= = R$ (and let $k_= = \infty$ if no such set exists) and
  let $k_{\supseteq}$ be the minimum size of a set  $X_\supseteq \subseteq T$
  such that $\bigcup X_\supseteq \supseteq R$. Let $s$ and $s'$ be the sizes the
  optimal \abbrv{DAG} compressions of the twinned incidence graphs of~$T$
  and~$T'$, respectively. Then $s + k_{\supseteq} + 2 \le s'\le s + k_= + 2$. 
\end{lemma~}
\begin{proof~}
  Consider an optimal \abbrv{DAG} compression $(V,A,E)$ of the
  twinned incidence graph $\ori T = (\ori V_1 \cup \ori V_2,\ori
  E)$ of~$T$. By Lemma~\ref{lemma:oneedge} we may assume that the 
  twins of~$\ori T$, meaning in particular \emph{all} vertices in~$\ori V_1$,
  have only one incident compression edge in~$E$. This implies that each
  compression edge $(\ori a_S, c_S) \in E$ must yield all of $\ori a_S
  \times S \subseteq \ori E$ and, hence, $\ori C(c_S) = S$. In other
  words, \emph{for each $S \in T$ there is a cluster vertex $c_S \in
  V$ with $\ori C(c_S) = S$.}

  To see that $s' \le s + k_= + 2$ holds, form $(V',A',E')$ from 
  $(V,A,E)$ as follows: Let $V' = V \cup \{\ori a_R,\ori b_R,c_R\}$,
  let $E'=E \cup (\{\ori a_R,\ori b_R\} \times \{c_R\})$, and let $A'=A
  \cup \{(c_R,c_S) \mid S \in X_=\}$. Clearly, $|A'| + |E'| = |A| +
  k_= + |E| + 2$ and $(V',A',E')$ is a \abbrv{DAG} compression of the
  twinned incidence graph of~$T'$.

  To see that $s' \ge s + k_{\supseteq} + 2$, start with an optimal
  \abbrv{DAG} compression $(V',A',E')$ of the twinned incidence graph of~$T'$ and,
  as before, let it satisfy the property that for each $S \in T'$
  there is a cluster vertex $c_S \in V'$ with $\ori C'(c_S) =
  S$. In particular, there must be a vertex~$c_R$ with $\ori C'(c_R) = 
  R$ and compression edges $(\ori a_R,c_R)$ and~$(\ori
  b_R,c_R)$. Consider the arcs $(c_R,v) \in A'$: We must have $\ori
  C'(v) \subseteq S \in T$ since, otherwise, we could get a better
  compression by directly connecting $c_R$ to $v$'s children and
  removing~$v$. In particular, the number of $v$ with $(c_r,v) \in A'$
  is at least $|X_{\supseteq}|$ as the children of $c_R$ must
  cover~$R$, so each $\ori C'(c_R)$ is contained in some $S \in T$,
  implying that there are at least $|X_{\supseteq}|$ many such arcs
  (possibly more).  
  This means that we can build a \abbrv{DAG} compression $(V,A,E)$
  of~$T$'s twinned incidence graph of size at most $s' - k_{\supseteq} - 2$ by 
  removing $\ori a_R$ and $\ori b_R$ along with their two incident
  compression edges, and removing all arcs $(c_R,v) \in A'$.
\end{proof~}

\begin{proof}[Proof of Theorem~\ref{theorem:main}]
  We only prove the $\Class{NP}$-hardness of
  $\Lang{min-dag-compression}$ by reducing $\Lang{set-cover}$ to
  it. Let $T$ be an input collection of subsets of~$U$ and let $k$ be
  a number. We may assume that some trivial cases are taken care of,
  namely that $\bigcup T = U$ (that is, $U$ can be
  covered at all), but also $U \notin T$. We also assume that $U =
  \{1,\dots,n\}$ holds.  

  The crucial idea is to impose some structure on~$T$ by replacing it
  by $\{\{s\} \mid s \in T\} \cup \bigcup_{i=1}^n \{S \cap
  \{1,\dots,i\} \mid S \in T\} \setminus \{\emptyset\}$. Note that we essentially add
  to $T$ all singleton sets and all ``non-empty initial segments of all $S \in T$'' where the
  initial segment is with respect to ordering of 
  the universe~$U$. For instance, $T = \{\{1,4,5,6\}, \{2,3, 5,
  7\}\}$, would be replaced by
  $\bigl\{\{1\},\dots,\{7\},\penalty0 \{1,4\},\penalty0 \{2,3\},\penalty0 \{1,4,5\},\penalty0 \{2,3,5\},\penalty0 \{1,4,5,6\},\penalty0  \{2,3,5,7\}\bigr\}$. Note
  that these added sets are not helpful with respect to covering~$U$
  with as few sets as possible in~$T$, so the original $T$ lies in
  $\Lang{set-cover}$ iff the new one does. For the resulting
  $T$, let us number the elements as $T = \{S_1,\dots, S_m\}$, where the
  $S_i$ are in standard order; that is, the first sets $S_1$ to $S_n$ have size~$1$,
  followed by all $S_i$ of size~$2$, followed by all of size~$3$, and
  so on.  

  Let $T_i := \{S_1,\dots, S_i\}$ be the collection encompassing only
  the first $i$ sets, let $\ori T_i$ be the twinned incidence graph
  of~$T_i$, and let $s_i$ be the size of an optimal \abbrv{DAG}
  compression of~$\ori T_i$. Clearly, $s_n = 2n$ (for each $\{j\} \in T_n$
  both $\ori a_{\{j\}}$ and $\ori b_{\{j\}}$ can and must be connected by a
  compression edge to $j \in U$). By Lemma~\ref{lemma-twinned-size} we
  have $s_{i-1} + X_{\supseteq} + 2 \le s_i \le s_{i-1} + X_= + 2$
  for all $i\in\{n+1,\dots, m\}$. However, for these~$i$ we have
  $|X_{\supseteq}| = |X_m| = 2$ as any $S_i$ can be covered exactly by two
  sets in $T_{i-1}$, namely the initial segment of $S_i$ missing the 
  highest-number $j \in S_i$ and the singleton $S_j = \{j\}$, but
  cannot be covered by any single set in $T_{i-1}$. All
  told, $s_i = s_{i-1} + 4$ for $i\in\{n+1,\dots,
  m\}$, and thus $s_m = 2n + 4(m-n-1) = 4m -2n -4$.

  The reduction now asks whether the twinned incidence graph of $T
  \cup \{U\}$ has an optimal \abbrv{DAG} compression of size $4m - 2n
  -2 + k = s_m + k + 2$. We can apply Lemma~\ref{lemma-twinned-size}
  and note that since $R = U$ is the whole universe, any $X_\supseteq$
  is also an $X_=$. In particular, such an optimal \abbrv{DAG}
  compression exists iff a set cover $X\subseteq T$ of $U$ exists of
  size $|X| \le k$.
\end{proof}

\subparagraph*{Hardness of Updating Optimal Compressions.}
Recall from the introduction that it is of independent interest to
show that updating an existing optimal \abbrv{DAG} compression to
reflect the addition or deletion of a single edge to the original graph is also
$\Class{NP}$-complete:

\begin{theorem~}\label{theorem:update-add}
  $\Lang{min-dag-compression-add}$ is $\Class{NP}$-complete.
\end{theorem~}

\begin{proof~}
  We start with the same idea as in the proof of
  Theorem~\ref{theorem:main}, where we reduced $\Lang{set-cover}$ to
  $\Lang{min-dag-compression}$. Just as in that proof, we start with a
  collection $T$ of subsets of~$U$, but now let $U = \{2,\dots,
  n+1\}$. Replace $T$ by $T' = \{\{1\} \cup S \mid S \in T\}$, that is, we
  add $1$ to all sets in our  collection and call such sets
  \emph{infected.} Clearly, this will not change the size of an
  optimal set cover, but now of $\{1\} \cup U$. For this new set~$T'$,
  we proceed as in the proof of Theorem~\ref{theorem:main} and add
  sets to~$T'$ so that $\hat{T}' = \{S_1,\dots, S_m\}$ for sets $S_i$ of
  increasing size and such that all initial segments of any $S_i$ are
  already in~$\hat{T}'$. Crucially, observe that all 
  sets in
  $\hat{T}'$ are infected, that is, contain~$1$. Finally, recall from the
  proof of Theorem~\ref{theorem:main} that the twinned incident graph
  $\ori{\hat{T}'} = (\ori V_1 \cup \ori V_2, \ori E)$ of~$\hat{T}'$ has an optimal 
  \abbrv{DAG} compression $(V,A,E)$ of size $4m -2(n+1) -4$ where $n+1
  = \left|U\cup\{1\}\right|$ and observe that we can easily
  compute this optimal \abbrv{DAG} compression. 

  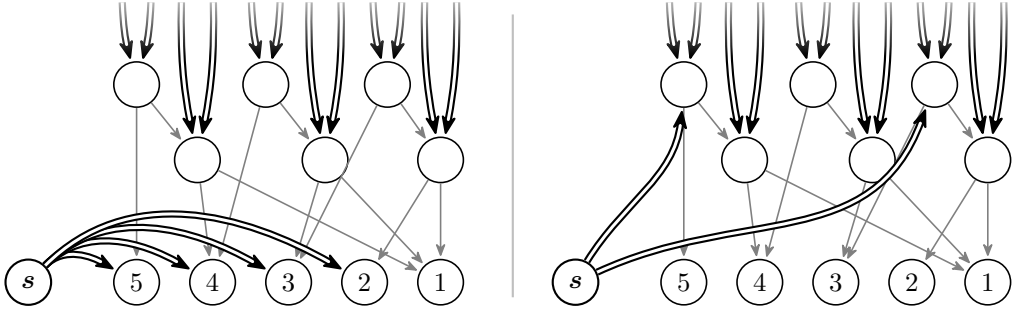
\begin{figure}[htpb]
  \centering
  \tikzfading[name=fade down, top color=transparent!0, bottom color=transparent!100]

  \newcommand{\AddFixed}{
    \centering

    \tikzset{ twinCluster/.style={sinkVertex, node, gray},
      clusterVertex/.style = {node}}


    \ifStageOne
    \tikzset{infected/.style={}}
    \fi

    \ifStageTwo
    \tikzset{infected/.style={}}
    \fi

    \clip (-1.8,-.6) rectangle (4.5,3.95);

    \node[sinkVertex,] (1) {$5$};
    \node[sinkVertex, right of=1] (2) {$4$};
    \node[sinkVertex, right of=2] (3) {$3$};
    \node[sinkVertex, right of=3] (4) {$2$};
    \node[sinkVertex,infected, right of=4] (x) {$1$};
    \node[sinkVertex,thick, left= 0.8cm of 1] (s) {$\ori{s}$};

    \def\sheight{2cm}
    \node[clusterVertex,infected, above  =\sheight and 0cm of 1] (S1) {};
    \node[clusterVertex,infected, above =\sheight and 0cm of 2, xshift =0.7cm] (S2) {};
    \node[clusterVertex,infected, above  =\sheight and 0cm of 4, xshift =0.3cm] (S3) {};

    \def\sbelowX{0.8cm}
    \def\sbelowY{1cm}
    \node[clusterVertex,infected, above = 1cm of 2, xshift=-0.2cm ] (S1b) {};
    \node[clusterVertex,infected, above = 1cm of 3 , xshift=0.48cm ] (S2b) {};
    \node[clusterVertex,infected,above = 1cm of x] (S3b) {};

    \ifStageOne
    \fi

    \draw (S1b) edge[arc] (x);
    \draw (S2b) edge[arc] (x);
    \draw (S3b) edge[arc] (x);
    \draw (S1b) edge[arc] (2);
    \draw (S2b) edge[arc] (3);
    \draw (S3b) edge[arc] (4);
    \draw (S1) edge[arc] (S1b);
    \draw (S2) edge[arc] (S2b);
    \draw (S3) edge[arc] (S3b);
    \draw (S1) edge[arc] (1);
    \draw (S2) edge[arc] (2);
    \draw (S3) edge[arc] (3);

    \tikzset{ compressedG/.style={compressed},
      compressedFade/.style={compressed},
      twinCluster/.style={}
    }
    \def\verticalDist{1.5}
    \def\halfDist{0.2}

    \node  (1tS1)  at ([yshift=\verticalDist cm, xshift=-\halfDist cm] S1)  [twinCluster] { };
    \node (2tS1)    at ([yshift=\verticalDist cm, xshift=\halfDist cm] S1) [twinCluster] {};
    
    \node  (1tS2)  at ([yshift=\verticalDist cm, xshift=-\halfDist cm] S2)  [twinCluster] { };
    \node (2tS2)    at ([yshift=\verticalDist cm, xshift=\halfDist cm] S2) [twinCluster] {};

    \node  (1tS3)  at ([yshift=\verticalDist cm, xshift=-\halfDist cm] S3)  [twinCluster] { };
    \node (2tS3)    at ([yshift=\verticalDist cm, xshift=\halfDist cm] S3) [twinCluster] {};

    \def\verticalDist{2.5}
    \def\halfDist{0.2}
    
    \node  (1tS1b)  at ([yshift=\verticalDist cm, xshift=-\halfDist cm] S1b)  [twinCluster] {};
    \node (2tS1b)    at ([yshift=\verticalDist cm, xshift=\halfDist cm] S1b) [twinCluster] {};
    
    \node  (1tS2b)  at ([yshift=\verticalDist cm, xshift=-\halfDist cm] S2b)  [twinCluster] {};
    \node (2tS2b)    at ([yshift=\verticalDist cm, xshift=\halfDist cm] S2b) [twinCluster] {};

    \node  (1tS3b)  at ([yshift=\verticalDist cm, xshift=-\halfDist cm] S3b)  [twinCluster] {};
    \node (2tS3b)    at ([yshift=\verticalDist cm, xshift=\halfDist cm] S3b) [twinCluster] {};

    \def\bend{6}
    \draw (1tS1) edge[compressedFade, bend right= \bend] (S1);
    \draw (2tS1) edge[compressedFade, bend left = \bend] (S1);
    \draw (1tS2) edge[compressedFade, bend right= \bend] (S2);
    \draw (2tS2) edge[compressedFade, bend left = \bend] (S2);
    \draw (1tS3) edge[compressedFade, bend right= \bend] (S3);
    \draw (2tS3) edge[compressedFade, bend left = \bend] (S3);
    \draw (1tS1b) edge[compressedFade, bend right= \bend] (S1b);
    \draw (2tS1b) edge[compressedFade, bend left = \bend] (S1b);
    \draw (1tS2b) edge[compressedFade, bend right= \bend] (S2b);
    \draw (2tS2b) edge[compressedFade, bend left = \bend] (S2b);
    \draw (1tS3b) edge[compressedFade, bend right= \bend] (S3b);
    \draw (2tS3b) edge[compressedFade, bend left = \bend] (S3b);

    \path[path picture={
        \fill[white, path fading=fade down] (path picture bounding box.south west) rectangle (path picture bounding box.north east);
    }] (-0.5,3) rectangle (4.5,4);

    \ifStageOne
    \draw (s) edge[compressedG, out=45, in= 150,  looseness=1] (1);
    \draw (s) edge[compressedG,  out=45, in= 150,  looseness=1] (2);
    \draw (s) edge[compressedG,  out=45, in= 150,  looseness=1] (3);
    \draw (s) edge[compressedG, out=45, in= 150,  looseness=1] (4);
    \fi

    \ifStageTwo
    \draw (s) edge[compressed, bend right = 25, out =0] (S1);
    \draw (s) edge[compressed, in = 250, out =25, looseness=1.1] (S3);
    \fi
    
  }

  \newif\ifStageOne
  \StageOnetrue 
  \StageOnefalse

  \newif\ifStageTwo
  \StageTwotrue 

  \begin{tikzpicture}[node distance=1cm, minimum size=0.8cm]
    \StageOnetrue
    \StageTwofalse
    \begin{scope}[xshift=-4cm, xshift=1cm, yshift = 1.6cm]
      \AddFixed

    \end{scope}

    \StageOnefalse
    \StageTwotrue
    \begin{scope}[xshift=4.2cm,  yshift = 1.6cm]
      \AddFixed
    \end{scope}
    
    \draw[sepline] ([yshift=-4mm]current bounding box.north) -- ([yshift=4mm]current bounding box.south);
    
  \end{tikzpicture}

  \caption{
    Construction from Theorem~\ref{theorem:update-add}: Left, we have an
    optimal \abbrv{DAG} compression of $\hat{T}'$'s twinned incidence graph,
    where all sets in $T$ contain the special element~$1$, joined with
    edges from a special vertex $\ori s$ to all elements in
    $\{2,\dots,n+1\}$ (but, \emph{not,} to~$1$). The fact that there
    is no edge to $1$ from~$\ori s$ means that no ``infected'' cluster
    vertex~$c$ (meaning $\ori s \in \ori C(c)$) may be used in the
    compression, implying that the shown compression is
    optimal. Adding the single edge $(\ori s,1)$ to $\ori E$ changes
    the situation dramatically: Now we can compress $\{\ori s\} \times
    \{1,\dots, n+1\}$ by connecting $\ori s$ to the cluster vertices
    of a minimal set cover.
  }
  \label{FixedAddFig}

\end{figure}

  At this point, we diverge from the previous proof: We build a graph
  $\ori G = (\ori V_1'\cup\ori V_2, \ori E')$ from $\ori{\hat{T}'}$ by adding
  a new vertex $\ori s$ to $\ori V_1$, so $\ori V_1'= \ori V_1 \cup
  \{\ori s\}$, and adding edges $(\ori s,u)$ to $\ori E$ for all $u
  \in \{2,\dots,n+1\} = U$, so $\ori E'= \ori E \cup \{(\ori s,u) \mid
  u\in U\}$ (but  note that $(\ori s,1) \notin \ori E'$). Clearly $D=(V
  \cup \{\ori s\},A, E \cup \{(\ori s,u) \mid u\in U\})$ is a \abbrv{DAG}
  compression of~$\ori G$ of size $4m -2(n+1) -4 + n$.

  We claim that the compression is optimal. To see this, let any
  optimal \abbrv{DAG} compression $(V_0,A_0,E_0)$ of~$\ori G$ be
  given. By Lemmas~\ref{lemma:TwinsWillBeTwins}, \ref{lemma:oneShore},
  and \ref{lemma:oneedge}, for every $S \in \hat{T}'$ with $|S| \ge 2$
  there is a cluster vertex $c_S \in V_0$ with $\ori C(c_S) =
  S$. There can be no compression edge $(\ori s,c_S)$ for any of these
  $c_S$ since all these $S$ are infected; meaning that we can only
  have $(\ori s,c) \in \ori E_0$ for other cluster vertices~$c$. In
  particular, if we remove all of these $m - (n+1)$ many $c_S$ together with their
  incident arcs and compression edges, we must still have the
  compression edges and arcs representing $\{\ori s\} \times
  U$. Each $c_S$ has two incoming compression edges and at least two
  outgoing arcs, meaning that we remove $4m-4n-4$ edges, leaving $2n +
  n$ edges. If we, next, remove all $\ori a_S$ and $\ori b_S$ for the
  singleton $S = \{j\}$ with $j \in \{2,\dots,n+1\}$ together with
  their compression edges, we are left with $n$ edges. However, to
  represent $\{\ori s\} \times U$, we need at least $|U| = n$ many
  compression edges and\,/\,or arcs, which shows that our compression
  was optimal.  

  To conclude the proof, we output $\ori G$, the optimal compression
  $D$ of~$\ori G$, the new edge $(\ori s,1)$, and the number $k + 4m
  -2(n+1) -4$. The crucial observation is that the new edge  $(\ori
  s,1)$ means that we now must compress $\{\ori s\} \times (\{1\} \cup
  U)$. It is now easy to see (using the same arguments as before) that
  the best way to do this is simply to connect $\ori s$ via
  compression edges to the clusters of a minimal size set cover $X
  \subseteq T$ of $U \cup \{1\}$. 
\end{proof~}

\begin{theorem~}\label{theorem:update-del}
    $\Lang{min-dag-compression-del}$ is $\Class{NP}$-complete.
\end{theorem~}

\newcommand{\twin}{\ori{t}_{\ori{1}}^{\ori{U} \cup \{\ori{x}\}}}
\begin{proof~} 
    We once more reduce from the \Lang{set-cover} problem. Let $(T,k)$ be an instance of
    \Lang{set-cover} where $T$ is a collection of subsets of $U = \{2, \dots, n
    + 1\}$. First, set $T' = T \cup \{ U \cup \{1\} \}$ and again add sets
    to $T'$ such that $\TSnice = \{S_1, \dots, S_m\}$ consists of sets $S_i$ of
    increasing size and all inititial segments of any $S_i$ are already in
    $\TSnice$. Recall that a set $S \in \TSnice$ is called \emph{infected} if $1
    \in S$.

    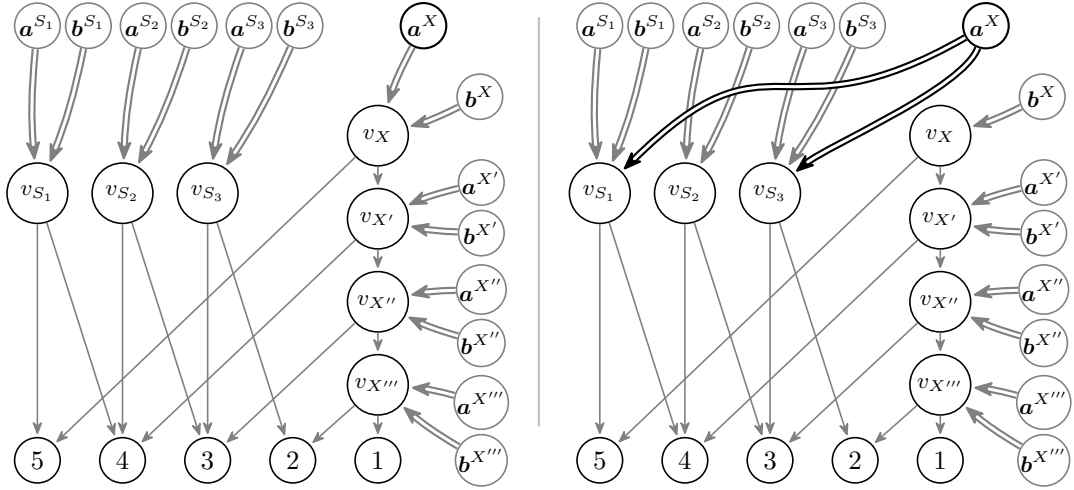
\begin{figure}[htpb]
	\centering
    \newcommand{\FixedDelete}{

    \tikzset{ twinCluster/.style={sinkVertex, inner sep = 0, font=\footnotesize,
    draw=gray, fill=white},
    clusterVertex/.style = {circle, draw, inner sep=0, font=\footnotesize}}

    \node[sinkVertex] (1) {$5$};
    \node[sinkVertex, right of=1] (2) {$4$};
    \node[sinkVertex, right of=2] (3) {$3$};
    \node[sinkVertex, right of=3] (4) {$2$};
    \node[sinkVertex, right of=4] (x) {$1$};

    \def\sheight{2.82cm}
    \node[clusterVertex, above  =\sheight and 0cm of 1] (S1) {$v_{S_1}$};
    \node[clusterVertex, above  =\sheight and 0cm of 2] (S2) {$v_{S_2}$};
    \node[clusterVertex, above  =\sheight and 0cm of 3] (S3) {$v_{S_3}$};

     \def\xspacing{0.28cm}
    \node[clusterVertex, above = \xspacing of x] (x4) {$v_{X'''}$};
    \node[clusterVertex, inner sep=0, above = \xspacing of x4] (x3) {$v_{X''}$};
    \node[clusterVertex, inner sep=0, above = \xspacing of x3] (x2) {$v_{X'}$};
    \node[clusterVertex, inner sep=0, above = \xspacing of x2] (x1) {$v_{X}$};

    \draw (x4) edge[arc] (x);
    \draw (x4) edge[arc,  bend left = 0] (4);
    \draw (x3) edge[arc] (x4);
    \draw (x3) edge[arc,  bend left = 0] (3);
    \draw (x2) edge[arc] (x3);
    \draw (x2) edge[arc,  bend left = 0] (2);
    \draw (x1) edge[arc] (x2);
    \draw (x1) edge[arc,  bend left = 0] (1);
    
    \draw (S1) edge[arc] (1);
    \draw (S1) edge[arc] (2);
    \draw (S2) edge[arc] (2);
    \draw (S2) edge[arc] (3);
    \draw (S3) edge[arc] (3);
    \draw (S3) edge[arc] (4);

        \tikzset{ compressedG/.style={compressed,gray}}
        \def\close{0.05cm}
        \def\step{0.1cm}
        \node[twinCluster, above =1.5cm and 0cm of S1] (1tS1) { $\ori{a}^{S_1}$};
        \node[twinCluster, right=\close of 1tS1] (2tS1) {$\ori{b}^{S_1}$};
        
        \node[twinCluster, right=\step of 2tS1] (1tS2) {$\ori{a}^{S_2}$};
        \node[twinCluster, right=\close of 1tS2] (2tS2)  {$\ori{b}^{S_2}$};
    
        \node[twinCluster, right=\step of 2tS2] (1tS3) {$\ori{a}^{S_3}$};
        \node[twinCluster, right=\close of 1tS3] (2tS3)  {$\ori{b}^{S_3}$};

        \node[twinCluster, right = 0.7cm of x] (1tX4) { $\ori{b}^{X'''}$};
        \node[twinCluster, above=\close of 1tX4] (2tX4) {$\ori{a}^{X'''}$};
        
        \node[twinCluster, above=\step of 2tX4] (1tX3) {$\ori{b}^{X''}$};
        \node[twinCluster, above=\close of 1tX3] (2tX3)  {$\ori{a}^{X''}$};
    
        \node[twinCluster, above=\step  of 2tX3] (1tX2) {$\ori{b}^{X'}$};
        \node[twinCluster, above=\close of 1tX2] (2tX2)  {$\ori{a}^{X'}$};

        \node[twinCluster, draw=black, fill=white, thick, right = 1 of 2tS3] (1tX1)  {$\ori{a}^{X}$};
        \node[twinCluster, below right=0.5 and 0.3 of 1tX1] (2tX1) {$\ori{b}^{X}$};

        \draw (1tS1) edge[compressedG, bend right = 5] (S1);
        \draw (2tS1) edge[compressedG,  bend left = 5] (S1);
            \draw (1tS2) edge[compressedG, bend right = 5  ] (S2);
        \draw (2tS2) edge[compressedG, bend left = 5] (S2);
            \draw (1tS3) edge[compressedG,  bend right = 5 ] (S3);
        \draw (2tS3) edge[compressedG, bend left = 5] (S3);

        \draw (1tX2) edge[compressedG, bend left = 5] (x2);
        \draw (2tX2) edge[compressedG, bend right = 5] (x2);
        \draw (1tX3) edge[compressedG, bend left = 5] (x3);
        \draw (2tX3) edge[compressedG, bend right = 5] (x3);
          \draw (1tX4) edge[compressedG, bend left = 5] (x4);
        \draw (2tX4) edge[compressedG, bend right = 5] (x4);

        \draw (2tX1) edge[compressedG, bend left = 5] (x1);
        \ifStageOne
        \draw (1tX1) edge[compressedG, bend right = 5] (x1);
        \fi
    
        \ifStageTwo
        \draw (1tX1) edge[compressed, in = 45, out = 210,  looseness = 1.5] (S1);
        \draw (1tX1) edge[compressed, in = 40, out = 250,  looseness = 0.5] (S3);
        \fi

    }

    \newif\ifStageOne
    \StageOnetrue 
    \StageOnefalse
    
    \newif\ifStageTwo
    \StageTwotrue 

    \begin{tikzpicture}[node distance=1.12cm, minimum size=0.8cm]
    \centering
        
        \StageOnetrue
        \StageTwofalse
        \begin{scope}
            \FixedDelete
        \end{scope}
    
        \StageOnefalse
        \StageTwotrue
        \begin{scope}[xshift=7.4cm]
            \FixedDelete
        \end{scope}

        \draw[sepline] ([yshift=-2mm]current bounding box.north) -- ([yshift=8mm]current bounding box.south);

    \end{tikzpicture}

   \caption{Construction from Theorem~\ref{theorem:update-del}: Left, we have an
    optimal \abbrv{DAG} compression of $\hat{T}'$'s twinned incidence graph.
    By
    deleting the edge $(\ori{a}^{X}, 1)$, we may no longer use the infected
    vertices in the optimal \abbrv{DAG} compression, which means that the new
    optimal compression is given by connecting $\ori{a}^{X}$ to the minimal set
    cover.}
    \label{FixedDeleteFig}
\end{figure}

    Proceeding, we construct the twinned incidence graph $\ori{\hat{T}'}$ of $\TSnice$ like in
    the proof of Theorem~\ref{theorem:main}, with a minimal \abbrv{DAG} compression
    $D=(V,A,E)$ of size $4m -2(n+1) -4$.
    
    We  choose~$\ori{e}=(\ori a_{U \cup \{1\}}, 1)$ as the to be
    removed edge. Moreover, we set $k'= k + 4m -2(n+1) -4$, and the
    reduction output for $\Lang{min-dag-compression-del}$ is
    $(D,\ori{e},k')$.

    We show the correctness of this reduction. First, let $(T,k)$ be an instance
    of the \Lang{set-cover} problem. Therefore, there is a set cover $T^*$ of
    size at most $k$, covering $U$. Thus, the \abbrv{DAG} compression
    $D'=(V,A,E')$ with $E'$ being obtained from $E$ by removing the compressed
    edge $( \ori a_{U \cup \{1\}}, v_{{U} \cup \{1\}})$ and adding the compressed
    edges  $\{\ori a_{U \cup \{1\}}\} \times \{ v_S \mid \ori{C}(v_S)\in T^*\} $ is a \abbrv{DAG}
    compression of $\ori{\TSnice}$ of $(\ori{V},\ori{E} \setminus \{
    \ori{e} \})$, with $|E'| \leq 2\cdot|T|+k$.

    Second, let $(D,\ori{e},k')$ be an instance of the
    $\Lang{min-dag-compression-del}$ problem and $\ori{\TSnice}$ be the
    graph that $D$ represents. Then, let $D'=(V,A,E')$ be a \abbrv{DAG}
    compression of $\ori{\TSnice}$ with $\ori{e}$ removed that has at most
    $k'$ compressed edges. Note that $\ori a_{U \cup \{1\}}$ can have at most $k$ outgoing
    compressed edges in $E'$, each of the $2\cdot|T|-1$ other 
    twins must have at least one outgoing compressed edge in $E'$. Moreover,
    $\ori a_{U \cup \{1\}}$ cannot have a compressed edge in $E'$ to an infected vertex, since
    this would represent edge $\ori{e}$. Therefore, for every vertex $v$ with
    $(\ori a_{U \cup \{1\}},v)\in E'$, there is a set $S \in T$, with $C(v) \subseteq S$ and
    the collection of every of theses at most $k$ sets~$S \in T$ is therefore a
    set cover of $U$.
\end{proof~}

\section{Conclusion}

In this paper, we investigated the complexity of fundamental algorithms on
\abbrv{DAG} compressions. We showed that computing a \abbrv{DAG} compression of
minimal size is $\Class{NP}$-complete, and that even the task of updating a
\abbrv{DAG} compression to deal with an added or deleted edge to the underlying
\abbrv{DAG} is $\Class{NP}$-hard. Indeed, a closer look at our proof of
Theorems~\ref{theorem:update-add} and~\ref{theorem:update-del} shows that even
if we additionally insist that during an update the cluster \abbrv{DAG} remains
the same, the problem is still $\Class{NP}$-complete. On the positive side, we
showed how to construct minimum spanning trees in almost linear time
with respect to the size of the \abbrv{DAG} compression rather than
the size of the original graph. We also saw that 
\abbrv{DAG} compressions yield strictly smaller compression results than tree
compressions.

The general intractability of finding minimal \abbrv{DAG} compressions
leads to further interesting questions: In the light of the result of
Bannach et al.~\cite{BannachMT24} that minimal \abbrv{DAG}
compressions may be easily constructed on graphs of bounded
twin-width when the corresponding contraction sequence is given, one
can ask which further structural properties such as information about
cliques, bicliques, and clusters with a common neighborhood in the
graph could be utilized to construct minimal \abbrv{DAG} compressions.

We showed in our paper that \abbrv{DAG} compressions are strictly better
than tree compressions. To better understand the difference in power of these two
compression types, a characterization of the graph families on which the
minimal \abbrv{DAG} compression is strictly smaller than the minimal tree
compression would be an interesting further research direction.

Another interesting direction would be to further extend the
algorithmic toolbox that uses \abbrv{DAG} compressions, thereby
providing even more faster fundamental algorithms for graphs. A
specific example would be the computation of a maximum flow on a
\abbrv{DAG} compression.

\bibliography{main}

\tcsautomoveinsert{main}

\section{Supplementary Algorithms and Figures for Kruskal's
Algorithm}\label{section:kruskal-appendix}

A \emph{union-find data structure $P$} keeps track of
a partition $\{\ori U_1,\dots, \ori U_k\}$ of~$\ori V$; for $\ori v
\in \ori V$ let us write $P(\ori v)$ for the set $\ori U_i \subseteq
\ori V$ containing~$\ori v$. In slight abuse of notation, we also
write $\ori U \in P$ to denote that $\ori U$ is one of the sets in
the partition. The data structure is initialized with the partition into singleton sets,
that is, with $P(\ori v) = \{\ori v\}$ for all $\ori v\in \ori V$. Two operations are
supported: First, $P.\Algo{find}(\ori v)$ returns some internal value with 
the guarantee that $P.\Algo{find}(\ori u) = P.\Algo{find}(\ori v)$ iff $P(\ori u) =
P(\ori v)$. Second, $P.\Algo{unite}(\ori u_1,\ori u_2)$ changes the partition to $P'$
so that $P'(\ori u_1) = P'(\ori u_2) = P(\ori u_1) \cup P(\ori u_2)$ and $P'(\ori v) = P(\ori v)$
for $\ori v \notin P(\ori u_1) \cup P(\ori u_2)$; in other words, $P(\ori u_1)$ and
$P(\ori u_2)$ are united in $P'$. It is well-known~\cite[Chapter 21.4]{CLRS} that a
union-find data structure can be implemented in such a way that a
sequence of $k$ many $\Algo{find}$ and $\Algo{unite}$ operations
takes time $k \cdot \alpha(|\ori V|)$, where $\alpha$ is the inverse Ackermann
function. 

\begin{figure}[htpb]
  \centering
  \tikzset{
    clean/.style={fill=white},
    unknown/.style={fill=black!25},
    clean a/.style=unknown,
    clean b/.style=unknown,
    clean c/.style=unknown,
  }
  \newcommand{\MSTExampleDAGC}{
    
    \node[sinkVertex] (1) {$\ori{1}$};
    \node[sinkVertex, right of=1] (2) {$\ori{2}$};
    \node[sinkVertex, right of=2] (3) {$\ori{3}$};
    \node[sinkVertex, right =0.2cm of 3] (4) {$\ori{4}$};
    \node[sinkVertex, right of=4] (5) {$\ori{5}$};
    \node[sinkVertex, right of=5] (6) {$\ori{6}$};
    \node[sinkVertex, right =0.2cm of 6] (7) {$\ori{7}$};
    
    \node[clusterVertex, clean a, above =0.7cm of 2] (c1) {$a$};
    \node[clusterVertex, clean b, above =0.5cm of 5] (c2) {$b$};
    \node[clusterVertex, clean c, above left = 1.9cm and 0cm of 7] (c3) {$c$};

    \draw[cluster] (c1) -- (1);
    \draw[cluster] (c1) -- (2);
    \draw[cluster] (c1) -- (3);
    \draw[cluster] (c2) -- (4);
    \draw[cluster] (c2) -- (5);
    \draw[cluster] (c2) -- (6);
    \draw[cluster] (c3) -- (7);
    \draw[cluster] (c3) -- (c2);
    
    \ifStageOne
    \draw (c1)  edge[compressedUn, "1", swap]  (c2);
    \draw (c1)  edge[compressedUn, "2"]    (c3);
    \fi
    \ifStageTwo
    \draw (c1)  edge[red, compressedUn, "1", swap]  (c2);
    \draw (c1)  edge[compressedUn, "2"]  (c3);
    \fi
    \ifStageThree
    \draw (c1)  edge[red, compressedUn, "1", swap]  (c2);
    \draw (c1)  edge[compressedUn, "2"]  (c3);
    \fi
    \ifStageFour
    \draw (c1)  edge[compressedUn, compressedUn, "1", swap]  (c2);
    \draw (c1)  edge[red, compressedUn, "2"] (c3);
    \fi
  }

  \newcommand{\MSTExampleGraph}{

    \begin{scope}[sinkVertex/.append style={fill=white}]
      \node[sinkVertex,yshift=3mm] (1) {$\ori{1}$};
      \node[sinkVertex, below=3mm of 1] (2) {$\ori{2}$};
      
      \node[sinkVertex, below=3mm of 2] (3) {$\ori{3}$};

      \node[sinkVertex, right = 1cm of  1] (4) {$\ori{4}$};
      \node[sinkVertex, below= 3mm of 4] (5) {$\ori{5}$};
      \node[sinkVertex, below= 3mm of 5] (6) {$\ori{6}$};
      
      \node[sinkVertex, left = 0.7cm of  2 ] (7) {$\ori{7}$};
    \end{scope}
    
    \begin{scope}[on background layer]
      \begin{scope}[blue!20,line width=9mm,line join = round, line cap=round]
        \ifStageOne
        \filldraw (1.center) -- (1.center);
        \filldraw (2.center) -- (2.center);
        \filldraw (3.center) -- (3.center);
        \filldraw (4.center) -- (4.center);
        \filldraw (5.center) -- (5.center);
        \filldraw (6.center) -- (6.center);
        \filldraw (7.center) -- (7.center);
        \fi

        \ifStageTwo
        \draw (3.center) -- (1.center) -- (4.center);
        \filldraw (5.center) -- (5.center);
        \filldraw (6.center) -- (6.center);
        \filldraw (7.center) -- (7.center);
        \fi

        \ifStageThree
        \filldraw (1.center) -- (3.center) -- (6.center)
        -- (4.center) -- cycle;
        \filldraw (7.center) -- (7.center);
        \fi

        \ifStageFour
        \filldraw (1.center) -- (4.center) -- (6.center) -- (3.center)
        -- (7.center) -- cycle;
        \fi
      \end{scope}
    \end{scope}
    
    \begin{scope}[dedge,-]
    \ifStageOne
    \draw(1) edge[]  (4);
    \draw[](1) -- (5);
    \draw[](1) -- (6);
    
    \draw[](2) -- (4);
    \draw[](2) -- (5);
    \draw[](2) -- (6);

    \draw[](3) -- (4);
    \draw[](3) -- (5);
    \draw[](3) -- (6);
    
    \draw(1) edge["\footnotesize 2", swap]   (7);
    \draw (2) edge["\footnotesize 2", swap] (7);
    \draw (3) edge["\footnotesize 2"] (7);
    \fi

    \ifStageTwo
    \draw[pickedOne](1) -- (4);
    \draw[](1) -- (5);
    \draw[](1) -- (6);
    
    \draw[pickedOne](2) -- (4);
    \draw[](2) -- (5);
    \draw[](2) -- (6);

    \draw[pickedOne](3) -- (4);
    \draw[](3) -- (5);
    \draw[](3) -- (6);
    
    \draw (1) edge["\footnotesize 2", swap] (7);
    \draw (2) edge["\footnotesize 2", swap] (7);
    \draw (3) edge["\footnotesize 2"] (7);
    \fi

    \ifStageThree
    \draw[pickedOne](1) -- (4);
    \draw[pickedOne](1) -- (5);
    \draw[pickedOne](1) -- (6);
    
    \draw[pickedOne](2) -- (4);
    \draw[](2) -- (5);
    \draw[](2) -- (6);

    \draw[pickedOne](3) -- (4);
    \draw[](3) -- (5);
    \draw[](3) -- (6);
    
    \draw (1) edge["\footnotesize 2", swap] (7);
    \draw (2) edge["\footnotesize 2", swap] (7);
    \draw (3) edge["\footnotesize 2"] (7);
    \fi

    \ifStageFour
    \draw[pickedOne](1) -- (4);
    \draw[pickedOne](1) -- (5);
    \draw[pickedOne](1) -- (6);
    
    \draw[pickedOne](2) -- (4);
    \draw[](2) -- (5);
    \draw[](2) -- (6);

    \draw[pickedOne](3) -- (4);
    \draw[](3) -- (5);
    \draw[](3) -- (6);
    
    \draw (1) edge[pickedOne, "\footnotesize 2", swap] (7);
    \draw (2) edge["\footnotesize 2", swap] (7);
    \draw (3) edge["\footnotesize 2"] (7);
    \fi

    \end{scope}

  }

  \begin{tikzpicture}[node distance=0.9cm]
    \centering

    \newif\ifStageOne
    \newif\ifStageTwo
    \newif\ifStageThree
    \newif\ifStageFour

    \newcommand{\mstblock}[3]{

      \begin{scope}[xshift = 0cm,yshift= 0cm]
        
	\MSTExampleDAGC
	
      \end{scope}
      \begin{scope}[xshift = 9cm, yshift=1.8cm]
	\MSTExampleGraph
	
      \end{scope}
      
      \node[anchor=mid] at (2.7, -0.8) { #1 };
      \node[anchor=mid] at (9.2, -0.8) { #2 };
      \node[anchor=mid] at (5.5, -0.8) { #3 };

    }

    \begin{scope}
      \StageOnetrue
      \mstblock{\abbrv{DAG} compression $D=(V,A,E)$ of $\ori G$}{Graph
        $\ori G = (\ori V, \ori E)$, partition \textcolor{blue!50}{$P$}}{}
      \draw[sepline] (-0.8,-1.2) -- (11.5,-1.2);
    \end{scope}

    \begin{scope}[yshift=-4.5cm]
      \StageTwotrue
      \tikzset{clean a/.style=clean}
      \mstblock{}{}{Process $\textcolor{red}{(a,b)} \in E$: First, clean $a$ by
        connecting each child to $\ori c(b) = \ori 4$}
      \draw[sepline] (-0.8,-1.2) -- (11.5,-1.2);
    \end{scope}

    \begin{scope}[yshift=-9cm]
      \StageThreetrue
      \tikzset{clean a/.style=clean, clean b/.style=clean}
      \mstblock{}{}{Process $\textcolor{red}{(a,b)} \in E$: Second, clean $b$ by
        connecting each child to $\ori c(a) = \ori 1$}
      \draw[sepline] (-0.8,-1.2) -- (11.5,-1.2);
    \end{scope}

    \begin{scope}[yshift=-13.5cm]
      \StageFourtrue
      \tikzset{clean a/.style=clean, clean b/.style=clean, clean c/.style=clean}
      \mstblock{}{}{Process $\textcolor{red}{(a,c)} \in E$: Clean $c$ by
        connecting children $b$ and $\ori 7$ if needed to $\ori c(a) =
        \ori 1$}
    \end{scope}

  \end{tikzpicture}

  \caption{Example computation of a minimum spanning tree (in green) on a \abbrv{DAG}
    compression~$D$ of a graph $\ori G$. Edges with no weight label in $\ori G$
    have weight $1$. The developing partition $P$ is shown in
    blue. Clean vertices in $V$ are shown in white; initially only the
    sinks are white. To process a compression edge like $\{a,b\}$, we
    first make $a$ clean by connecting its child clusters to the
    representative vertex $\ori c(b)$, which is $\ori 4 \in \ori C(b)$ in
    this example. Then we clean $b$. The algorithm would then try to
    add the edge $\{\ori c(a),\ori c(b)\} = \{\ori1,\ori4\}$, but the
    union-find data structure $P$ informs the algorithm that these
    vertices are already in the same set and no edge is
    added. Processing $\{a,b\}$ then means cleaning $c$, which causes no
    edges to be added between $\ori c(a)$ and the cluster of the
    already-clean child $b$, but it causes an edge to be added between
    $\ori c(a)$ and~$\ori 7$. Once more, the final edge $\{\ori
    c(a),\ori c(c)\} = \{\ori1,\ori4\}$ is skipped.
  }
  \label{example:mstComp}

\end{figure}

\end{document}
